\newcounter{MYtempeqncnt}
\newcommand{\expect}[1]{{\mathbb{E}\left[{#1}\right]}}
\newcommand{\cexpect}[2]{{\mathbb{E}_{#2}\left[{#1}\right]}}
\newcommand{\pr}{{\mathbb{P}}}
\newcommand{\pgf}{\mathsf{G}}
\newcommand{\mgf}{\mathsf{M}}
\newcommand{\stirling}[2]{\mathsf{S}(#1,#2)}
\newcommand{\set}[1]{{\mathcal{#1}}}
\newcommand{\setopen}[2]{{\mathcal{#1}^o_{#2}}}
\newcommand{\setclose}[2]{{\mathcal{#1}^c_{#2}}}
\newcommand{\ple}[2]{{\alpha_{#2}}}
\newcommand{\nple}[2]{{\hat{\alpha}_{#2}}}
\newcommand{\power}[2]{\mathrm{P}_{#1#2}}
\newcommand{\res}[2]{\mathrm{W}_{#1#2}}
\newcommand{\npower}[2]{\mathrm{\hat{P}}_{#1#2}}
\newcommand{\bias}[2]{\mathrm{B}_{#1#2}}
\newcommand{\noisepower}[1]{{\sigma^2_{#1}}}
\newcommand{\nbias}[2]{\mathrm{\hat{B}}_{#1#2}}
\newcommand{\dnstysrat}[2]{{\lambda_{#1#2}}}
\newcommand{\dnstycrat}[2]{{{\lambda}_{#1#2^{'}}}}
\newcommand{\tierdnsty}[2]{{\lambda}_{#1#2}}
\newcommand{\userdnsty}{\lambda_u}
\newcommand{\R}{{\mathbb{R}}}
\newcommand{\SINRthresh}{\tau}
\newcommand{\Z}{\mathsf{Z}}
\newcommand{\RATEthresh}{\rho}
\newcommand{\nRATEthresh}{\hat{\rho}}
\newcommand{\uRATEthresh}{t}
\newcommand{\metric}[2]{\mathrm{T}_{#1#2}}
\newcommand{\nmetric}[2]{\hat{\mathrm{T}}_{#1#2}}
\newcommand{\SINR}{\mathtt{SINR}}
\newcommand{\SIR}{\mathtt{SIR}}
\newcommand{\SNR}{\mathtt{SNR}}
\newcommand{\assocr}{\mathcal{C}}
\newcommand{\rate}[2]{R_{#1#2}}
\newcommand{\ndist}[2]{z_{#1#2}}
\newcommand{\ndistns}{z}
\newcommand{\NDIST}[2]{Z_{#1#2}}
\newcommand{\ndistnsc}{y}
\newcommand{\NDISTc}[2]{Y_{#1#2}}
\newcommand{\NAP}[2]{X_{#1#2}^*}
\newcommand{\PPPsrat}[2]{\Phi_{#1#2}}
\newcommand{\PPPcrat}[2]{{\Phi}_{#1#2^{'}}}
\newcommand{\PPPu}{\Phi_{u}}
\newcommand{\tierPPP}[2]{{\Phi}_{#1#2}}
\newcommand{\load}[2]{{N}_{#1#2}}
\newcommand{\oload}[2]{{N}_{o,#1#2}}
\newcommand{\avload}[2]{\bar{N}_{#1#2}}
\newcommand{\mk}{(m,k)}
\newcommand{\area}{C}
\newcommand{\areans}{c}
\newcommand{\pcov}{\mathcal{S}}
\newcommand{\psys}{\mathcal{R}}
\newcommand{\passoc}{\mathcal{A}}
\newtheorem{thm}{{\bf Theorem}}
\newtheorem{cor}{Corollary}
\newtheorem{lem}{Lemma}
\newtheorem{prop}{Proposition}
\theoremstyle{definition}
\newtheorem{definition}{Definition}
\theoremstyle{remark}
\newtheorem{rem}{Remark}
\begin{document}
\title{Offloading in Heterogeneous Networks: Modeling, Analysis, and Design Insights}
\author{Sarabjot Singh,~\IEEEmembership{Student~Member,~IEEE,} Harpreet S. Dhillon,~\IEEEmembership{Student~Member,~IEEE,}\\ and Jeffrey G. Andrews,~\IEEEmembership{Fellow,~IEEE} \thanks{This work has been supported by the Intel-Cisco Video Aware Wireless Networks (VAWN) program and NSF grant CIF-1016649. A part of this paper is accepted for presentation at IEEE ICC 2013 in Budapest, Hungary \cite{SinDhiAndICC13}.

 The  authors  are  with  Wireless Networking and Communications Group (WNCG),  The  University  of  Texas at  Austin  (email:  sarabjot@utexas.edu,  dhillon@utexas.edu, jandrews@ece.utexas.edu).}}
\maketitle
\begin{abstract}
Pushing data traffic from cellular to WiFi is an example of inter radio access technology (RAT) offloading. While this clearly alleviates congestion on the over-loaded cellular network, the ultimate potential of such offloading and its effect on overall system performance is not well understood. To address this, we develop a general and tractable model that consists of $M$ different RATs, each deploying up to $K$ different tiers of access points (APs), where each tier differs in transmit power, path loss exponent, deployment density and bandwidth.  Each class of APs is modeled as an independent Poisson point process (PPP), with mobile user locations modeled as another independent PPP, all channels further consisting of i.i.d. Rayleigh fading.  The distribution of rate over the entire network is then derived for a weighted association strategy, where such weights can be tuned to optimize a particular objective.
We show that the optimum fraction of traffic offloaded to maximize $\SINR$  coverage is not in general the same as the one that maximizes rate coverage, defined as the fraction of  users achieving a given rate.
\end{abstract}

\section{Introduction}\label{sec:intro}
Wireless networks are facing  explosive data demands driven largely by video.
While operators continue to rely on their (macro) cellular networks to provide wide-area coverage, they are eager  to find complementary alternatives to ease the pressure, especially in areas where subscriber density is high. Complementing  the fast  evolving heterogeneous cellular networks (HCNs) \cite{qcom_hetnet_wmag} with the already widely deployed  WiFi APs is very attractive to operators and a key aspect of their strategy \cite{qcom_wifi_hetnet}. In fact, WiFi access points (APs) along with  femtocells  are projected to carry over 60\% of all the  global data traffic  by 2015 \cite{juniper_report}. A future wireless heterogeneous  network (HetNet) can be envisioned to have operator-deployed macro base stations (BSs) providing a coverage blanket,  along with pico BSs, low powered user-deployed femtocells  and  user/operator-deployed low powered WiFi APs.
\subsection{Motivation and Related Work}
Aggressively offloading mobile users from macro BSs to smaller BSs like  WiFi hotspots, however,  can lead to degradation of user-specific as well as network wide performance.
For example, a  WiFi AP with excellent signal strength may suffer from heavy load  or have less effective bandwidth (channels), thus reducing the effective rate it can serve at \cite{qcom_wifi_offload}. On the other hand, a conservative approach may result in load disparity,  which not only leads to underutilization of resources  but also degrades the performance of multimedia applications due to bursty interference caused by the lightly loaded APs \cite{singh2012interference}.  Clearly, in such cases any offloading strategy agnostic to these conditions is undesirable, which emphasizes the importance of more adaptive offloading strategies.

RAT selection  has been studied extensively in earlier works  both  from centralized as well as decentralized aspects (see \cite{mathematicalKuoWang12} for a survey). Fully centralized schemes, such as in \cite{navlinwong08,prekum06,kumaltkel07itc},  try to maximize a network wide utility as a solution to the association optimization problem. Decentralized schemes have been studied from game theoretic approaches in \cite{kumaltkel07,KhaIbrCohLahToh11,elaalthadalt10} and as heuristic randomized algorithms in \cite{moeibrlahkha12}. However most of these works focused on  flow level assignment and  lacked explicit spatial location modeling of the APs and users and the corresponding impact on association. The presented work  is more similar to ``cell breathing"  \cite{TogYosKoh98,Jal98,SanWanMadGit04}, wherein the BS association regions are expanded or shrunk depending on the load. Contemporary cellular standards like LTE use a cell breathing approach to address the problem of load balancing in HCNs through cell range expansion (CRE) \cite{3ggp_r1_103824, qcom_hetnet_wmag} where  users are offloaded to smaller cells using an association bias. A positive association bias implies that a user would be offloaded to a smaller BS as soon as the received power difference from the macro and  small BS drops  below the bias value. The presented work employs CRE to tune the aggressiveness of offloading from one RAT to another in HetNets.
Tools from Poisson point process (PPP) theory  and stochastic geometry \cite{mecke_book} allow us to quantify the optimal association bias  of each constituent tier of each RAT, which maximizes the fraction of time a typical user in the network is served with a rate greater than its minimum rate requirement.

The metric of rate coverage used in this paper, which  signifies the fraction of user population able to meet their rate thresholds, captures the inelasticity of traffic such as video services \cite{singhvideoicc12}, whereas traditional utility based metrics are more suitable for elastic traffic with no hard rate thresholds.
There has been considerable advancement  in the theory of  HCNs \cite{dhiganbacand12,josanxiaand12,mukh12} whereby the locations of APs of each tier are assumed to form a homogeneous PPP.  The case of modeling macro cellular networks using a  PPP has been strengthened through empirical validation in \cite{andganbac11} and theoretical validation in \cite{BlaKarKee12}. Load distribution was derived for macro cellular networks in \cite{YuKim11} and an empirical fitting based approach was proposed  in \cite{CaoZhoNiu12} for association area distribution in a two-tier cellular network.  See \cite{singh2012interference,DhiGanJ2013,lin2012modeling} for a spectral efficiency analysis, where load is modeled through  activity of AP queues. While the PPP assumption offers attractive tractability in modeling interference and hence the signal-to-interference-and-noise ratio ($\SINR$) in HetNets, the distribution of \textit{rate} has been elusive.
Superposition of point processes, each denoting a class\footnote{A class refers to a distinct RAT-tier pair.} of APs, leads to the formation of disparate association regions (and hence load distribution) due to the unequal transmit powers, path loss exponents  and association weights among different classes of APs.
Thus, resolving to complicated system level simulations for investigating impact of various wireless algorithms on rate, even for preliminary insights, is not uncommon.
One of the goals of this paper is to bridge this gap and provide a tractable framework for deriving the  rate distribution in  HetNets.
\subsection{Contributions}
The contributions of this paper can be categorized under two main headings.
\begin{enumerate}
\item{\textbf{Modeling and Analysis.}}
A  general $M$-RAT $K$-tier HetNet model  is proposed  with each class of  APs  drawn from a  homogeneous PPP.  This is similar to \cite{dhiganbacand12,josanxiaand12,mukh12} with the key difference  being  the  APs of a RAT act as interferers to only the  user associated with that RAT. For example, cellular BSs do not interfere with the  users associated with a WiFi AP and vice versa.   The proposed model is validated by comparing the analytical  results  with those of a realistic multi-RAT deployment in Section \ref{sec:validation}. \\
\textit{\textbf{Association Regions in HetNet:}}
Based on the weighted path loss based association used in this work, the tessellation formed by association regions of APs (region served by the AP)  is characterized as a general form of the multiplicatively weighted Poisson Voronoi (PV).  Much progress has been made in modeling the  area of Poisson Voronoi, see \cite{Gil62, HinMil80,Ferenc2007} and references therein, however that of a general multiplicatively weighted PV  is an  open problem. We propose an analytic approximation for characterizing the association areas (and hence the load) of an AP,  which is shown to be quite accurate in the context of rate coverage.\\
\textit{\textbf{Rate Distribution in HetNet:}}
We derive the rate complementary cumulative distribution function  (CCDF) of a typical user in the presented HetNet setting in Section \ref{sec:cov}.  Rate distribution incorporates congestion in addition to the  proximity effects that may not be accurately captured by the $\SINR$ distribution alone.  Under certain plausible scenarios the derived expression is in closed form and provides insight into system design.
\item{\textbf{System Design Insights.}}
This work allows the inter-RAT offloading to be seen through the prism of  association bias wherein the bias can be tuned to suit a network wide objective. We present the following insights in Section \ref{sec:opt_offload} and \ref{sec:results}. \\
\textit{\textbf{$\SINR$ Coverage:}}
The probability that a randomly located user has $\SINR$ greater than an arbitrary threshold is called $\SINR$ coverage; equivalently this is the CCDF of $\SINR$. In a simplified  two-RAT scenario, e.g. cellular and WiFi, it is shown that the optimal amount of traffic to be offloaded, from one to another, depends solely on their respective  $\SINR$ thresholds. The optimal association bias, however, is shown to be inversely proportional to the density and transmit power of the corresponding RAT. The maximum $\SINR$ coverage  under the optimal association bias is then shown to be independent of the density of  APs in the network. \\
\textit{\textbf{Rate Coverage:}}
The probability that a randomly located user has rate greater than an arbitrary threshold is called rate coverage; equivalently this is the CCDF of rate. We show that
 the  amount of  traffic to be routed through a RAT for maximizing rate coverage can be found analytically and  depends on the ratio of the respective resources/bandwidth  at each RAT and the user's respective rate (QoS) requirements. Specifically, higher the corresponding ratio, the more traffic should be  routed through the corresponding RAT. Also, unlike $\SINR$ coverage,  the optimal traffic offload fraction increases with the density of the corresponding RAT. Further, the rate coverage always increases with the density of the infrastructure.
\end{enumerate}

\section{System Model}\label{sec:sysmodel}
The system model in this paper considers up to a $K$-tier deployment of the APs for each of the $M$-RATs. The set of APs belonging to the same RAT operate in the same spectrum and hence do not interfere with the APs of other RATs.
The locations of the APs of the $k^{\mathrm{th}}$ tier of the $m^{\mathrm{th}}$ RAT are modeled as a 2-D homogeneous PPP, $\PPPsrat{m}{k}$, of density (intensity) $\dnstysrat{m}{k}$.
 Also, for every class $(m,k)$ there might be BSs allowing no access (closed access) and thus acting only as interferers. For example, subscribers of a particular operator are not able to connect to another operator's  WiFi APs but receive interference from them. Such  closed access APs are modeled
as an independent tier ($k'$) with PPP $\PPPcrat{m}{k}$  of density $\dnstycrat{m}{k}$. The set of all such pairs with non-zero densities in the network is denoted by $\set{V} \triangleq\bigcup_{m=1}^M\bigcup_{k\in \set{V}_m} (m,k) $ with $\set{V}_m$ denoting the set of all the tiers of RAT-$m$, i.e., $\set{V}_m = \{k:\tierdnsty{m}{k}+\dnstycrat{m}{k}\neq 0\}$. Similarly,  $\setopen{V}{m}$ and $\setclose{V}{m}$ is used to denote the set of open and closed access tiers of RAT-$m$, respectively. Further, the set of open access classes of APs is $\setopen{V}{}\triangleq\bigcup_{m=1}^M \bigcup_{k\in \setopen{V}{m}}(m,k)$.
The users in the network are assumed to be distributed according to an independent homogeneous PPP $\PPPu$ with density $\userdnsty$.

Every AP of $(m,k)$ transmits with the same transmit power $\power{m}{k}$ over  bandwidth $\res{m}{k}$.
The downlink desired and interference signals are assumed to experience path loss with a path loss exponent $\ple{m}{k}$  for the corresponding tier $k$. The power received at a user  from an AP  of $\mk$ at a distance $x$ is  $\power{m}{k} h x^{-\ple{m}{k}}$ where $h$ is the channel power gain.  The random channel gains are  Rayleigh distributed with average power of unity, i.e., $h \sim \exp(1)$. The general fading distributions can be considered at some loss of tractability \cite{BacBlaMuh09}. The noise is assumed additive with power $\noisepower{m}$ corresponding to the $m^{\text{th}}$ RAT.   Readers can refer to Table \ref{table:notationtable} for quick access to the notation used in this paper. In the table and the rest of the paper, the  normalized value of a parameter of a class is its value divided by the value it takes for the class of the serving AP.
\begin{table}
	\centering
\caption{Notation Summary}
	\label{table:notationtable}
  \begin{tabulary}{\columnwidth}{ |c | C | }
    \hline
    \textbf{Notation} & \textbf{Description} \\ \hline
    $M$ & Maximum number of RATs  in the network\\ \hline
 $K$ & Maximum number of tiers of a RAT \\ \hline
$(m,k)$  &  Pair denoting the $k^\mathrm{th}$ tier of the $m^\mathrm{th}$ RAT\\ \hline
$\set{V}; \setopen{V}{}$ & The set of classes of APs $\bigcup_{m=1}^M\bigcup_{k \in\set{V}_m} (m,k)$, where $\set{V}_m = \{k: \tierdnsty{m}{k}+\dnstycrat{m}{k}\neq 0\}$; the set of open access classes of APs $\bigcup_{m=1}^M\bigcup_{k \in\setopen{V}{m}}(m,k)$,  where $\setopen{V}{m} = \{k: \tierdnsty{m}{k}\neq 0\}$ \\ \hline
$\PPPsrat{m}{k}; \PPPcrat{m}{k}; \PPPu$  & PPP of the open access APs of $(m,k)$;  PPP of the closed access APs of  $\mk$; PPP of the mobile users  \\ \hline
$\dnstysrat{m}{k}; \dnstycrat{m}{k}; \userdnsty$ & Density of open access APs of $(m,k)$; density of closed access APs of $(m,k)$; density of mobile users \\ \hline
$\metric{m}{k}; \nmetric{m}{k}$ & Association weight for $(m,k)$; normalized (divided by that of the serving AP) association weight for $\mk$\\\hline
$\power{m}{k}; \npower{m}{k}$ & Transmit power of APs of $\mk$, specifically $\power{m}{1} = 53$ dBm, $\power{m}{2} = 33$ dBm, $\power{m}{3} = 23$ dBm; normalized transmit power of APs of $\mk$ \\\hline
$\bias{m}{k}; \nbias{m}{k}$ & Association bias for $(m,k)$; normalized association bias  for $\mk$.\\\hline
$\ple{m}{k}; \nple{m}{k}$ & Path loss exponent of $k^{\text{th}}$ tier; normalized path loss exponent of $k^{\text{th}}$ tier\\\hline
$\noisepower{m}$ & Thermal noise power corresponding to $m^{\text{th}}$ RAT\\\hline
$\res{m}{k}$ & Effective bandwidth  at an AP of $(m,k)$ \\\hline
$\SINRthresh_{mk}$ & $\SINR$  threshold of user when associated with $\mk$ \\\hline
$\RATEthresh_{mk}$ & Rate threshold of user when associated with $\mk$ \\\hline
$\load{m}{k}$ & Load (number of users) associated with an AP of $\mk$\\\hline
$\area_{mk}$ & Association area of a typical AP of $\mk$\\\hline
$\pcov_{mk}; \pcov$ &  $\SINR$ coverage of user when associated with $\mk$; overall $\SINR$  coverage of  user\\\hline
$\psys_{mk}; \psys$ &  Rate  coverage of user when associated with $\mk$; overall rate coverage of user\\\hline
\end{tabulary}
\end{table}

\subsection {User Association}
For the analysis that follows, let $\NDIST{m}{k}$ denote the distance of a typical user from the nearest AP of $(m,k)$. In this paper, a general association metric is used in  which a mobile user is connected to a particular RAT-tier pair $(i,j)$ if
\begin{equation}\label{eq:association}
(i,j) = \arg \max_{(m,k)\in \setopen{V}{}} \metric{m}{k}\NDIST{m}{k}^{-\ple{m}{k}},
\end{equation}
where $\metric{m}{k}$ is the association weight for $(m,k)$ and ties are broken arbitrarily. These association weights  can  be tuned to suit a certain network-wide objective. As an example,  if  $\metric{1}{k} \gg \metric{2}{k}$, then more traffic is routed through RAT-1 as compared to RAT-2.
Special cases  for the   choice of association weights, $\metric{m}{k}$, include:
\begin{itemize}
\item $\metric{m}{k} = 1$: the association is to the nearest base station.
\item $\metric{m}{k} =  \power{m}{k}\bias{m}{k}$: is the cell range expansion (CRE) technique [2] wherein  the association is based on the maximum biased received power, with $\bias{m}{k} $  denoting the association bias corresponding to  $(m,k)$.
\item Further, if  $\bias{m}{k} \equiv 1$,  then the association is based on  maximum received power.
\end{itemize}
Note that ``$\equiv$" is  henceforth used to assign the same value to a parameter for all classes of APs, i.e.,  $x_{mk}\equiv c$ is equivalent to $x_{mk} =c \,\,\forall\,\, \mk \in \set{V}$.
The optimal association weights maximizing rate coverage  would depend on load, $\SINR$, transmit powers, densities, respective bandwidths, and  path loss exponents of AP classes in the network. Further discussion on the design of optimal association weights  is deferred to Section \ref{sec:opt_offload}.
 For notational brevity the  normalized parameters of $(m,k)$, conditioned on $(i,j)$ being the serving class, are 
\[ \nmetric{m}{k} \triangleq \frac{\metric{m}{k}}{\metric{i}{j}},\,\, \npower{m}{k} \triangleq \frac{\power{m}{k}}{\power{i}{j}},\,\, \nbias{m}{k} \triangleq \frac{\bias{m}{k}}{\bias{i}{j}}\,\,,\,\, \nple{m}{k} \triangleq \frac{\ple{m}{k}}{\ple{i}{j}}.  \]

The association model described above  leads to the formation of association regions  in the Euclidean plane as described below.
\begin{definition}
\textbf{Association region} of an AP is the region of the Euclidean plane in which all users are served by the corresponding AP.
Mathematically, the association region of an AP of class $(i,j)$ located at $x$ is
\begin{multline}
\assocr_{x_{ij}}= \Bigg\{ y \in \R^2: \|y-x\| \leq \left(\frac{\metric{i}{j}}{\metric{m}{k}}\right)^{1/\ple{i}{j}}\|y-\NAP{m}{k}(y)\|^{\nple{m}{k}}\\
 \forall\,\, (m,k) \in \setopen{V}{}\Bigg\},
\end{multline}
where $\NAP{m}{k}(y) = \arg \min\limits_{x \in \PPPsrat{m}{k}}\|y-x\|$.
\end{definition}
The readers familiar with the field of spatial tessellations would recognize that the random tessellation formed by the collection $\{\assocr_{x_{ij}}\}$ of association regions is  a general case of the circular Dirichlet tessellation \cite{AshBol86}.  The circular Dirichlet tessellation (also known as multiplicatively weighted Voronoi) is the special case of the presented model  with equal path loss coefficients.
Fig. \ref{fig:arexpansion} shows the association regions with two classes of APs in the network ($\set{V}=\{(1,1);(2,3)\}$, say) for two ratios of association weights  $\frac{\metric{1}{1}}{\metric{2}{3}}=20$ dB  and  $\frac{\metric{1}{1}}{\metric{2}{3}}=10$ dB. The path loss exponent is $\ple{m}{k}\equiv 3.5$.
\begin{figure}
	\centering
		\includegraphics[width=\columnwidth]{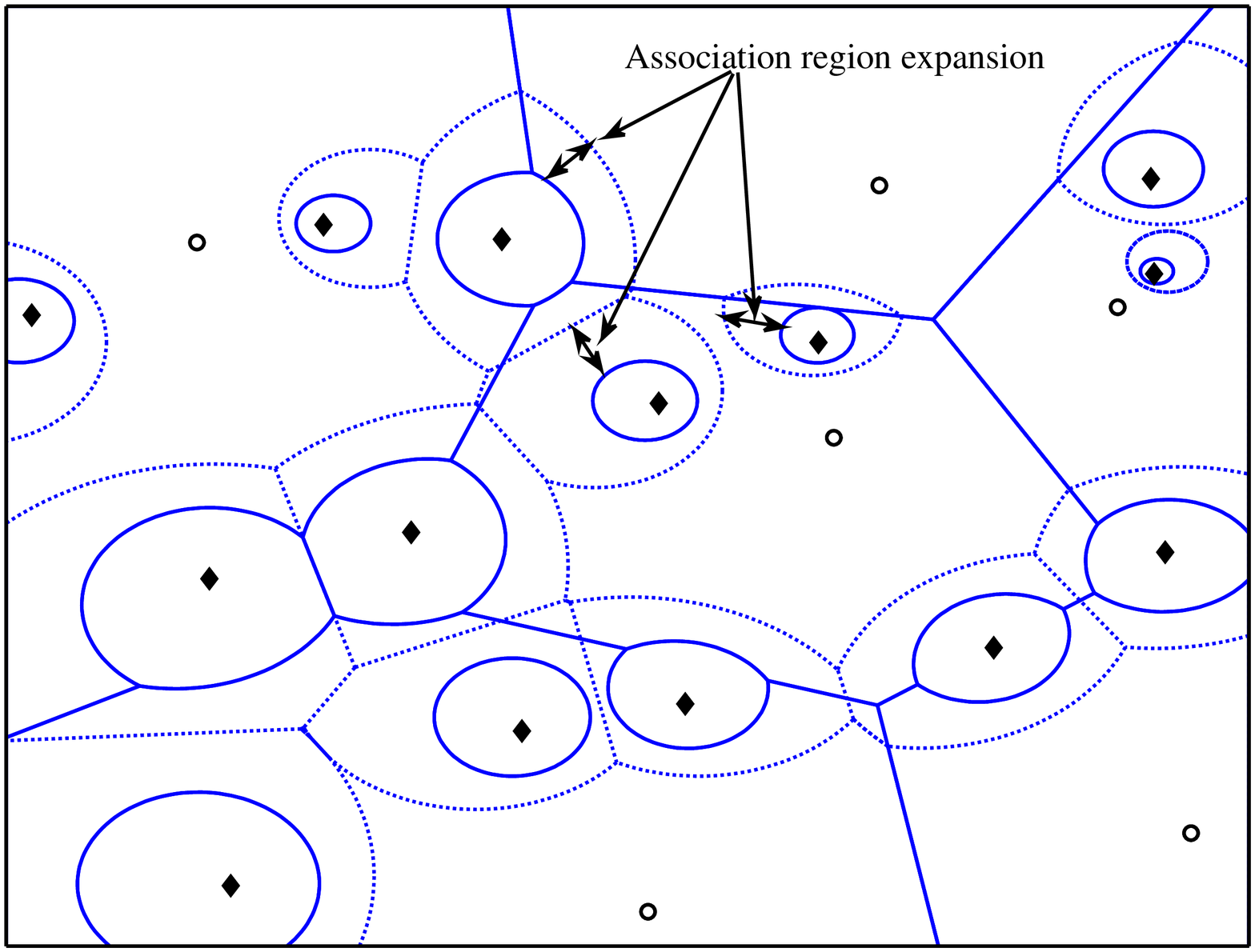}
		\caption{Association regions of a network with $\set{V}=\{(1,1); (2,3)\}$. The APs of $(1,1)$ are shown as hollow circles and those of $(2,3)$ are shown as solid diamonds. Solid lines show the association regions with $\frac{\metric{1}{1}}{\metric{2}{3}}= 20$ dB and dotted lines show the expanded association regions of $(2,3)$ resulting from the use of $\frac{\metric{1}{1}}{\metric{2}{3}}= 10$ dB.}
	\label{fig:arexpansion}
\end{figure}
\subsection{Resource Allocation}
 A \textit{saturated} resource allocation model is assumed in the downlink of all the APs. This assumption implies that each AP always has data to transmit to its associated mobile users and hence users can be allocated more rate than their rate thresholds.
Under the assumed resource allocation, each user receives rate proportional to its link's spectral efficiency.  Thus, the  rate of a user associated with $(i,j)$ is given by
\begin{equation}\label{eq:ratemodel}
\rate{i}{j} = \frac{\res{i}{j}}{\load{i}{j}}\log\left(1+\SINR_{ij}\right),
\end{equation}
where  $\load{i}{j}$ denotes the total number of users served by the AP,   henceforth referred to as the \textit{load}.
The presented rate model captures both the congestion effect (through load) and proximity effect (through $\SINR$). For 4G cellular systems,  this rate allocation model has the interpretation of scheduler allocating the OFDMA resources ``fairly" among users. For 802.11 CSMA networks, assuming equal channel access probabilities \cite{kumaltkel07itc,moeibrlahkha12} across associated users, leads to the rate model (\ref{eq:ratemodel}).  Although the above mentioned resource allocation strategy is assumed in the paper, the ensuing analysis can be extended to a RAT-specific resource allocation methodology as well.
\section{Rate Coverage}\label{sec:cov}
This section derives the rate coverage and is the main technical section of the paper. The rate coverage is defined as
\begin{equation}
\psys\triangleq\pr(R>\RATEthresh),
\end{equation}
and can be thought of equivalently as: (i) the probability that a randomly chosen user can achieve a
target rate $\RATEthresh$, (ii) the average fraction of users in the network that  achieve rate $\RATEthresh$, or (iii) the average fraction of the network area that is receiving rate greater than  $\RATEthresh$.

\subsection{Load Characterization}
This section analyzes the load, which is crucial to get a handle on the rate distribution.
The following analysis uses the notion of typicality, which is made rigorous  using Palm theory \cite[Chapter~4]{mecke_book}.

\begin{lem}\label{lem:userpgf}
The load at a typical AP of $(i,j)$  has the  probability generating function (PGF) given by
\begin{equation}
\pgf_{\load{i}{j}}(z)= \expect{\exp\left(\userdnsty\area_{ij}\left(z-1\right)\right)},
\end{equation}
where $\area_{ij}$ is the association area of a typical AP of $(i,j)$.
\end{lem}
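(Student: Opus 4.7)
The plan is to exploit the independence between the user PPP $\PPPu$ and the AP processes, conditioning on the (random) association region of the typical AP and then invoking the Poisson PGF.

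First, I would make the notion of a "typical AP of $(i,j)$" precise via Palm theory. Since $\PPPsrat{i}{j}$ is a homogeneous PPP, Slivnyak's theorem allows us to place the typical AP at the origin without altering the distribution of the remaining APs of class $(i,j)$ or of any other class, because the various PPPs are mutually independent. Under this Palm distribution, the association region $\assocr_{0_{ij}}$ of the typical AP becomes a well-defined random set, measurable with respect to the AP processes only, and its Lebesgue measure is the random variable $\area_{ij}$.

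Next, I would compute the PGF of $\load{i}{j}$ by conditioning on the full configuration of the AP processes (equivalently, on $\assocr_{0_{ij}}$). Since the user process $\PPPu$ is an independent homogeneous PPP of intensity $\userdnsty$ and is unaffected by Palm conditioning on a point of $\PPPsrat{i}{j}$, the number of users that fall inside $\assocr_{0_{ij}}$—which by definition equals $\load{i}{j}$—is, conditional on $\area_{ij}$, Poisson distributed with mean $\userdnsty\,\area_{ij}$. Using the well-known Poisson PGF $\mathbb{E}[z^N \mid \mu] = \exp(\mu(z-1))$, we get
\[
\expect{z^{\load{i}{j}} \mid \area_{ij}} = \exp\!\left(\userdnsty\,\area_{ij}(z-1)\right),
\]
and taking an outer expectation over $\area_{ij}$ yields the claim.

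The only delicate point is the conditional-Poisson step: one must verify that restricting the independent user PPP to a random Borel set determined solely by the AP processes gives a Poisson number of points with parameter $\userdnsty$ times the set's measure. This follows from the Poisson restriction property together with the independence of $\PPPu$ and the AP processes, so no subtle correlation issues arise. Everything else is bookkeeping, and the resulting expression matches the stated PGF.
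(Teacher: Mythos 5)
Your argument is correct and follows essentially the same route as the paper's proof: invoke Slivnyak's theorem to place the typical AP of $(i,j)$ at the origin, observe that conditioned on the association area $\area_{ij}$ the number of users of the independent PPP $\PPPu$ falling in that region is Poisson with mean $\userdnsty\area_{ij}$, and then average the Poisson PGF over $\area_{ij}$. The extra care you take in justifying the conditional-Poisson step (measurability of the association region with respect to the AP processes and independence of $\PPPu$) is a welcome but not substantively different addition.
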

\begin{IEEEproof}
We consider the process $\PPPsrat{i}{j}\cup\{0\}$ obtained by adding an AP of $(i,j)$ at the origin of the coordinate system, which is the typical AP under consideration. This is allowed by Slivnyak's theorem \cite{mecke_book}, which states that the properties observed by a typical\footnote{The term typical and random  are interchangeably used in this paper.} point of the PPP, $\Phi_{ij}$, is same as those observed by the point at origin  in the process $\Phi_{ij}\cup\{0\}$.
The random variable (RV) $\load{i}{j}$   is the number of users from $\PPPu$ lying  in the association region  $\assocr_{0_{ij}}$  of the typical cell constructed from the process $\PPPsrat{i}{j}\cup\{0\}$. Letting $\area_{ij}$ denote the random area of this typical association region, the PGF of $\load{i}{j}$ is  given by
\[\pgf_{\load{i}{j}}(z)= \expect{z^{\load{i}{j}}}=\expect{\exp\left(\userdnsty\area_{ij}\left(z-1\right)\right)}\,,\]
where the property used is that conditioned on $\area_{ij}$,  $\load{i}{j}$  is a Poisson RV with mean $\userdnsty\area_{ij}$.
\end{IEEEproof}
As per the association rule (\ref{eq:association}), the probability that a typical user associates with a particular RAT-tier pair would be directly proportional to the corresponding AP density and association weights. The following lemma identifies the exact relationship.
\begin{lem}\label{lem:aspr}
The probability that a typical user is associated with $(i,j)$  is given by
\begin{equation}\label{eq:aspr}
\passoc_{ij} =  2\pi\dnstysrat{i}{j}\int_{0}^{\infty}{z \exp\left(-\pi \sum_{(m,k) \in \setopen{V}{}} G_{ij}(m,k)z^{2/\nple{m}{k}}\right)}\mathrm{d}z,
\end{equation}
where
\begin{equation}G_{ij}(m,k)=\dnstysrat{m}{k} \nmetric{m}{k}^{2/\ple{m}{k}}. \end{equation}
If $\ple{m}{k}\equiv\alpha$, then the association probability is simplified to
\begin{equation}\label{eq:simplifiedassocpr}
\passoc_{ij} =  \frac{\dnstysrat{i}{j}}{\sum_{(m,k) \in \setopen{V}{}} G_{ij}(m,k)}.
\end{equation}
\end{lem}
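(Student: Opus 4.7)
The plan is to compute $\passoc_{ij}$ by conditioning on $\NDIST{i}{j}$, the distance from the typical user (which, by Slivnyak's theorem applied to $\PPPu$, we may place at the origin) to its nearest AP of class $(i,j)$, and then expressing the association event as a conjunction of independent distance constraints on the remaining open-access PPPs.

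First I would translate the maximum-weight association rule \eqref{eq:association} into a constraint on the distance to the nearest AP of every other open-access class. Given $\NDIST{i}{j}=z$, the condition $\metric{i}{j}z^{-\ple{i}{j}} \geq \metric{m}{k}\NDIST{m}{k}^{-\ple{m}{k}}$ rearranges to
\[
\NDIST{m}{k} \;\geq\; \nmetric{m}{k}^{1/\ple{m}{k}}\, z^{1/\nple{m}{k}} \qquad \forall\,(m,k)\in\setopen{V}{}\setminus\{(i,j)\}.
\]
Because the PPPs $\{\PPPsrat{m}{k}\}$ are mutually independent, so are the nearest-neighbor distances $\NDIST{m}{k}$, and each obeys the standard void probability $\pr(\NDIST{m}{k}\geq r) = \exp(-\pi\dnstysrat{m}{k} r^2)$.

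Next I would multiply these void probabilities together and integrate against the density of $\NDIST{i}{j}$, which for a homogeneous PPP of intensity $\dnstysrat{i}{j}$ is $f_{\NDIST{i}{j}}(z)=2\pi\dnstysrat{i}{j}z\exp(-\pi\dnstysrat{i}{j}z^2)$. Collecting terms yields
\[
\passoc_{ij} = \int_0^\infty 2\pi\dnstysrat{i}{j} z\, e^{-\pi\dnstysrat{i}{j}z^2} \!\!\prod_{(m,k)\neq(i,j)}\!\! e^{-\pi\dnstysrat{m}{k}\nmetric{m}{k}^{2/\ple{m}{k}} z^{2/\nple{m}{k}}} \mathrm{d}z.
\]
The exponential factor coming from the density of $\NDIST{i}{j}$ is precisely the $(m,k)=(i,j)$ term of the product (since $\nmetric{i}{j}=1$ and $\nple{i}{j}=1$), so it can be absorbed into the sum inside the exponent, giving \eqref{eq:aspr}.

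For the closed-form reduction when $\ple{m}{k}\equiv\alpha$, note that $\nple{m}{k}\equiv 1$, so $z^{2/\nple{m}{k}} = z^2$ factors out of the sum and the integrand reduces to $2\pi\dnstysrat{i}{j} z \exp(-\pi z^2 S)$ with $S=\sum_{(m,k)}G_{ij}(m,k)$; the elementary integral $\int_0^\infty 2\pi z e^{-\pi S z^2}\mathrm{d}z = 1/S$ then delivers \eqref{eq:simplifiedassocpr}. There is no real technical obstacle here; the only bookkeeping subtlety is verifying that the term $(m,k)=(i,j)$ can be folded into the product without double-counting, which follows from the normalization conventions $\nmetric{i}{j}=1$, $\nple{i}{j}=1$ and the fact that conditioning on $\NDIST{i}{j}=z$ already enforces the void condition out to radius $z$.
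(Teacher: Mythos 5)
Your proposal is correct and follows essentially the same route as the paper's proof: condition on $\NDIST{i}{j}=z$, turn the association rule into void-probability constraints on the independent PPPs of the other open-access classes, integrate against the nearest-neighbor density $f_{\NDIST{i}{j}}$, and absorb the $(i,j)$ density term into the sum using $\nmetric{i}{j}=\nple{i}{j}=1$ (your conditioning-first ordering is, if anything, slightly cleaner than the paper's factorization step). The equal-exponent reduction via the elementary Rayleigh-type integral is also exactly the intended simplification.
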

\begin{proof}
The result can be proved by a minor modification of Lemma 1 of \cite{josanxiaand12}. The proof is presented in Appendix \ref{sec:proofassocpr} for completeness.
\end{proof}
\begin{figure*}[!t]
\setcounter{MYtempeqncnt}{\value{equation}}
\setcounter{equation}{18}
\begin{equation}\label{eq:pcov}
\pcov = \sum_{(i,j)\in \setopen{V}{}} 2\pi\dnstysrat{i}{j} \int_{0}^{\infty}\ndistnsc \exp\left(-\frac{\SINRthresh_{ij}}{\SNR_{ij}(y)} -\pi\left\{\sum_{k\in \set{V}_i} D_{ij}(k,\SINRthresh_{ij})\ndistnsc^{2/\nple{m}{k}}+ \sum_{\mk \in \setopen{V}{}}G_{ij}(m,k)\ndistnsc^{2/\nple{m}{k}}\right\} \right)\mathrm{d} \ndistnsc \, ,
\end{equation}
\setcounter{equation}{\value{MYtempeqncnt}}
\hrulefill
\vspace*{4pt}
\end{figure*}
The following two remarks provide  alternate interpretations of the association probability.
\begin{rem}
The  probability that a typical user is associated with the $i^{\mathrm{th}}$ RAT  is given by $\passoc_i = \sum_{j\in \setopen{V}{i}} \passoc_{ij}$.
This probability is also the average fraction of  the traffic offloaded, referred henceforth as \textit{traffic offload fraction},  to the $i^\mathrm{th}$ RAT.
\end{rem}
\begin{rem}\label{rem:assoc_area}
Using the ergodicity of the PPP, $\passoc_{ij}$ is the average fraction of the total area covered by the association regions of the APs of $(i,j)$.
\end{rem}
Based on  Remark \ref{rem:assoc_area}  we note that the  mean association area of a typical  AP of $(i,j)$ is $\frac{\passoc_{ij}}{\dnstysrat{i}{j}}$. Below we propose a linear scaling based approximation for association areas in HetNets, which matches this first moment. The results based on the area approximation are validated in Section \ref{sec:validation}.\\
{\textbf{Area Approximation}}: The area  $\area_{ij}$  of a   typical AP of the  $j^{\text{th}}$ tier of the  $i^{\text{th}}$ RAT can be approximated as
\begin{equation}\label{eq:area_approx}
\area_{ij} = \area\left(\frac{\dnstysrat{i}{j}}{\passoc_{ij}}\right),
\end{equation}
where $\area\left(y\right)$ is the area of a typical PV of density $y$ (a scale parameter).
\begin{rem}
The approximation is trivially exact for a single tier, single RAT scenario, i.e.,  for $\|\set{V}\|=1$.
\end{rem}
\begin{rem}
If $\metric{m}{k}\equiv \mathrm{T}$ and $\ple{m}{k} \equiv \ple{}{}$, then the approximation is exact. In this case, $\passoc_{ij} = \frac{\dnstysrat{i}{j}}{\sum_{(m,k)\in \setopen{V}{}}\dnstysrat{m}{k}}$ and
\begin{equation}
\area\left(\frac{\dnstysrat{i}{j}}{\passoc_{ij}}\right)= \area\left(\sum_{(m,k)\in \setopen{V}{}}\dnstysrat{m}{k}\right).
\end{equation}
With equal association weights and path loss coefficients, the HetNet model becomes the superposition of independent  PPPs, which is again a PPP with density equal to the sum of that of the constituents  and hence the resulting tessellation  is a PV. The right hand side of the above equation is equivalent to a typical association area of a PV  with density $\sum_{(m,k)\in \setopen{V}{}}\dnstysrat{m}{k}$.
\end{rem}
\begin{rem}
Using the  distribution proposed in \cite{Ferenc2007} for $\area(y)$, the distribution of $\area_{ij}$ is
\begin{equation}\label{eq:areadist}
f_{\area_{ij}}(\areans)= \frac{3.5^{3.5}}{\Gamma(3.5)}\frac{\dnstysrat{i}{j}}{\passoc_{ij}}\left(\frac{\dnstysrat{i}{j}}{\passoc_{ij}}\areans\right)^{2.5}\exp\left(-3.5\frac{\dnstysrat{i}{j}}{\passoc_{ij}}\areans\right),
\end{equation}
where $\Gamma(x) = \int_0^\infty \exp(-t)t^{x-1}\mathrm{d} t$ is the gamma function.
\end{rem}

To characterize the load at the tagged AP (AP serving the typical mobile user)  the implicit area biasing needs to be considered and the PGF of the number of  \textit{other} -- apart from the typical -- users ($\oload{i}{j}$) associated with the tagged AP needs to be characterized.
\begin{lem}\label{lem:oloadpgf}
The PGF of the other users associated with the tagged AP of $(i,j)$ is
\begin{equation}
\pgf_{\oload{i}{j}}(z)= 3.5^{4.5}\left(3.5+\frac{\userdnsty\passoc_{ij}}{\dnstysrat{i}{j}}(1-z)\right)^{-4.5}.
\end{equation}
Furthermore, the moments of
$\oload{i}{j}$ are given by
\begin{equation}
\expect{\oload{i}{j}^n}=\sum_{k=1}^n\left(\frac{\userdnsty\passoc_{ij}}{\dnstysrat{i}{j}}\right)^k \stirling{n}{k}\expect{\area^{k+1}(1)},
\end{equation}
where $\stirling{n}{k}$ are Stirling numbers of the second kind\footnote{The notation of Stirling numbers given by $\stirling{n}{k}$ should not be confused with that of $\SINR$ coverage, $\pcov$.}.
\end{lem}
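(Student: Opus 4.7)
The plan is to exploit the standard size-biasing phenomenon for Voronoi-like tessellations: a uniformly located typical user lands in a cell with probability proportional to that cell's area, so the distribution of the \emph{tagged} AP's association area (the one serving the typical user) is the area-biased version of the typical AP's area distribution. Once I have the tagged area's law, I will condition on it, exploit the fact that given the area $c$ the number of other users in that cell is Poisson with mean $\userdnsty c$ (by the independence of $\PPPu$ and Slivnyak's theorem applied to the user process), and then just evaluate the resulting integral in closed form using the gamma density in (\ref{eq:areadist}).

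First I would write, for any bounded test function $\varphi$,
\[
\expect{\varphi(\area_{\text{tag},ij})}=\frac{\expect{\area_{ij}\,\varphi(\area_{ij})}}{\expect{\area_{ij}}},
\]
which is just the area-biased PDF of $\area_{ij}$. Using the Ferenc--N\'eda form in (\ref{eq:areadist}), write $\beta\triangleq\dnstysrat{i}{j}/\passoc_{ij}$ so that $\area_{ij}$ is Gamma-distributed with shape $3.5$ and rate $3.5\beta$ (in particular $\expect{\area_{ij}}=1/\beta$, consistent with Remark \ref{rem:assoc_area}). Multiplying by $a$ and renormalizing converts Gamma$(3.5,\,3.5\beta)$ into Gamma$(4.5,\,3.5\beta)$, so the tagged-cell area has density
\[
f_{\area_{\text{tag},ij}}(a)=\frac{(3.5\beta)^{4.5}}{\Gamma(4.5)}\,a^{3.5}\exp(-3.5\beta a).
\]

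Next I would compute the PGF by iterated expectation: conditional on the tagged area $a$, the number of other users is Poisson$(\userdnsty a)$ (the typical user is excluded, but $\PPPu$ restricted to $\R^2\setminus\{0\}$ is still a PPP of intensity $\userdnsty$ by Slivnyak), so $\expect{z^{\oload{i}{j}}\mid\area_{\text{tag},ij}=a}=\exp(\userdnsty a(z-1))$. Averaging against the Gamma$(4.5,\,3.5\beta)$ density gives the MGF of a Gamma, yielding
\[
\pgf_{\oload{i}{j}}(z)=\Bigl(1+\tfrac{\userdnsty(1-z)}{3.5\beta}\Bigr)^{-4.5}=3.5^{4.5}\Bigl(3.5+\tfrac{\userdnsty\passoc_{ij}}{\dnstysrat{i}{j}}(1-z)\Bigr)^{-4.5},
\]
which is exactly the claimed expression.

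For the moment formula, I would use the well-known identity $\expect{N^n}=\sum_{k=0}^n\stirling{n}{k}\mu^k$ for $N\sim$Poisson$(\mu)$, apply it conditionally with $\mu=\userdnsty\area_{\text{tag},ij}$, and take expectation. Since $\stirling{n}{0}=0$ for $n\ge 1$, only $k\ge 1$ terms survive, giving $\expect{\oload{i}{j}^n}=\sum_{k=1}^n\userdnsty^k\stirling{n}{k}\expect{\area_{\text{tag},ij}^k}$. Finally, by size-biasing, $\expect{\area_{\text{tag},ij}^k}=\expect{\area_{ij}^{k+1}}/\expect{\area_{ij}}$, and the scaling relation $\area(y)\stackrel{d}{=}\area(1)/y$ (coming from (\ref{eq:area_approx})) turns this into $(\passoc_{ij}/\dnstysrat{i}{j})^k\expect{\area^{k+1}(1)}$, matching the statement. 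The only non-routine step is justifying the area-biasing rigorously; I would handle this by the standard Palm-calculus argument applied to the marked PPP of APs with marks being their association areas, analogous to the inspection paradox used in the Voronoi literature.
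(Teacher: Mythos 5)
Your proposal is correct and follows essentially the same route as the paper's proof: area-biasing the typical cell area (the gamma density in (\ref{eq:areadist}) becoming shape $4.5$), invoking Slivnyak/reduced Palm for the user process so the other users are conditionally Poisson, evaluating the gamma--Poisson mixture for the PGF, and using the Poisson--Stirling identity together with the size-biased moment relation and the scaling of $\area(\cdot)$ for the moments. No substantive differences.
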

\begin{IEEEproof}
See Appendix \ref{sec:proofoloadpgf}.
\end{IEEEproof}
The moments of the typical association region of a PV of unit density can be computed numerically and are also available in  \cite{Gil62}.

\subsection{\texorpdfstring{$\SINR$}{SINR} Distribution}
The $\SINR$ of a typical user associated with an AP of $(i,j)$ located at $\ndistnsc$ is
\begin{equation}\label{eq:sinr}
\SINR_{ij}(\ndistnsc) = \frac{\power{i}{j}h_{\ndistnsc} \ndistnsc^{-\ple{i}{j}}}{\sum_{k \in \set{V}_i} I_{ik} + \noisepower{i}},
\end{equation}
where $h_\ndistnsc$ is the channel gain from the tagged  AP located at a distance  $\ndistnsc$, $I_{ik}$ denotes the interference from the APs of RAT $i$ in the tier $k$. The set of APs contributing to interference are from $\tierPPP{i}{k} \bigcup \PPPcrat{i}{k} \setminus o\,\forall k \in \set{V}_i$ 
, where  $o$ denotes the tagged AP from $(i,j)$. Thus
\begin{equation}
I_{ik} = \power{i}{k}\sum_{x\in \tierPPP{i}{k} \setminus o}  h_{x} x^{-\ple{i}{k}} + \power{i}{k}\sum_{x'\in \PPPcrat{i}{k}}  h_{x'} {x'}^{-\ple{i}{k}}.
\end{equation}
For a typical user, when associated with $(i,j)$, the probability that the received $\SINR$ is greater than a threshold  $\SINRthresh_{ij}$, or $\SINR$ coverage, is
\begin{equation}
\pcov_{ij}(\SINRthresh_{ij}) \triangleq \cexpect{\pr\{\SINR_{ij}(\ndistnsc)> \SINRthresh_{ij}\}}{\ndistnsc},
\end{equation}
and the overall $\SINR$ coverage is
\begin{equation}
\pcov = \sum_{(i,j)\in \setopen{V}{}}\pcov_{ij} (\SINRthresh_{ij})\passoc_{ij}.
\end{equation}

\begin{figure*}[!t]
\setcounter{MYtempeqncnt}{\value{equation}}
\setcounter{equation}{19}
\begin{align}\label{eq:ncov}
\psys =&\sum_{(i,j)\in \setopen{V}{}} 2\pi\dnstysrat{i}{j}  \sum_{n\ge 0}\frac{3.5^{3.5}}{n!}\frac{\Gamma(n+4.5)}{\Gamma(3.5)}\left(\frac{\userdnsty\passoc_{ij}}{\dnstysrat{i}{j}}\right)^n\left(3.5 + \frac{\userdnsty\passoc_{ij}}{\dnstysrat{i}{j}}\right)^{-(n+4.5)}\nonumber\\
&\times\int_{0}^{\infty}\ndistnsc \exp\left(-\frac{\uRATEthresh(\nRATEthresh_{ij}(n+1))}{\SNR_{ij}(y)} -\pi\left\{\sum_{k\in \set{V}_i} D_{ij}(k,\uRATEthresh(\nRATEthresh_{ij}(n+1)))\ndistnsc^{2/\nple{m}{k}}+ \sum_{\mk \in \setopen{V}{}}G_{ij} (m,k)\ndistnsc^{2/\nple{m}{k}}\right\} \right)\mathrm{d} \ndistnsc \, ,\end{align}
\setcounter{equation}{\value{MYtempeqncnt}}
\hrulefill
\vspace*{4pt}
\end{figure*}

\begin{figure*}[!t]
\setcounter{MYtempeqncnt}{\value{equation}}
\setcounter{equation}{26}
\begin{equation}\label{eq:rcovmeanload}
\bar{\psys} =\sum_{(i,j)\in \setopen{V}{}} 2\pi\dnstysrat{i}{j}\int_{0}^{\infty}\ndistnsc \exp\Bigg(-\frac{\uRATEthresh(\nRATEthresh_{ij}\avload{i}{j})}{\SNR_{ij}(y)} -\pi\bigg\{\sum_{k\in \set{V}_i} D_{ij}(k,\uRATEthresh(\nRATEthresh_{ij}\avload{i}{j}))\ndistnsc^{2/\nple{m}{k}}+ \sum_{\mk \in \setopen{V}{}}G_{ij} (m,k)\ndistnsc^{2/\nple{m}{k}}\bigg\} \Bigg)\mathrm{d} \ndistnsc \, ,
\end{equation}
\setcounter{equation}{\value{MYtempeqncnt}}
\hrulefill
\vspace*{4pt}
\end{figure*}
Interestingly, the distance of a typical user to the tagged  AP in $(i,j)$, $\NDISTc{i}{j}$,  is not only influenced by $\PPPsrat{i}{j}$ but also by $\PPPsrat{m}{k}\,\, \forall \mk \in \setopen{V}{}$ , as APs of other open access classes also compete to become the serving AP. The distribution of this  distance  is given by the following lemma.
\begin{lem}\label{lem:ndist}
The probability distribution function (PDF), $f_{\NDISTc{i}{j}}(\ndistnsc)$, of the distance $\NDISTc{i}{j}$ between a typical user and the tagged AP of $(i,j)$ is
\begin{equation}
f_{\NDISTc{i}{j}}(\ndistnsc) = \frac{2\pi\dnstysrat{i}{j}}{\passoc_{ij}} \ndistnsc\exp\left\{-\pi\sum_{\mk \in \setopen{V}{}}G_{ij}(m,k)\ndistnsc^{2/\nple{m}{k}}\right\}.
\end{equation}
\end{lem}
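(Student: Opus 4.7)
The plan is to compute the conditional distribution of the distance to the nearest AP in class $(i,j)$, conditioned on a typical user actually associating with class $(i,j)$. The proof mirrors the derivation of the association probability in Lemma \ref{lem:aspr}, but stops one step earlier, before integrating out $z$.

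First, I would rewrite the association rule (\ref{eq:association}) as a system of pairwise constraints: conditioned on $\NDIST{i}{j}=z$, the user associates with $(i,j)$ if and only if
\[
\NDIST{m}{k} \;\ge\; \nmetric{m}{k}^{1/\ple{m}{k}}\, z^{1/\nple{m}{k}} \qquad \forall\,(m,k)\in\setopen{V}{}.
\]
Note the constraint for $(m,k)=(i,j)$ is vacuous since $\nmetric{i}{j}=1$ and $\nple{i}{j}=1$.

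Next, I would use the mutual independence of the PPPs $\PPPsrat{m}{k}$ together with the void probability of a homogeneous PPP, $\pr(\NDIST{m}{k}\ge r)=\exp(-\pi\dnstysrat{m}{k} r^2)$, to express the joint ``infinitesimal'' probability of associating with $(i,j)$ and having $\NDIST{i}{j}\in [z, z+\mathrm{d}z)$. Using the PDF $2\pi\dnstysrat{i}{j} z\exp(-\pi\dnstysrat{i}{j} z^2)\,\mathrm{d}z$ for $\NDIST{i}{j}$ and multiplying by $\prod_{(m,k)\ne(i,j)}\exp(-\pi\dnstysrat{m}{k}\nmetric{m}{k}^{2/\ple{m}{k}} z^{2/\nple{m}{k}})$ for the survival of all competing classes, this collapses cleanly, once I absorb the $(i,j)$ factor back into the product, into
\[
2\pi\dnstysrat{i}{j}\, z\exp\!\Bigg(-\pi\sum_{(m,k)\in\setopen{V}{}} G_{ij}(m,k)\, z^{2/\nple{m}{k}}\Bigg)\mathrm{d}z,
\]
with $G_{ij}(m,k)=\dnstysrat{m}{k}\nmetric{m}{k}^{2/\ple{m}{k}}$ as defined in Lemma \ref{lem:aspr}.

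Finally, I would divide by the marginal probability $\passoc_{ij}$ from (\ref{eq:aspr}) to convert the joint density into the conditional PDF of $\NDISTc{i}{j}$, which yields exactly the stated expression. The only subtlety worth flagging is being careful with the heterogeneous path-loss exponents so that the constraint $\NDIST{m}{k}\ge\nmetric{m}{k}^{1/\ple{m}{k}}\NDIST{i}{j}^{1/\nple{m}{k}}$ produces the correct exponent $2/\nple{m}{k}$ on $z$ inside the void probability; once that bookkeeping is right, the rest is a one-line assembly and there is no real obstacle.
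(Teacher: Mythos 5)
Your proposal is correct and follows essentially the same route as the paper's own proof: the paper computes $\pr(\NDISTc{i}{j}>\ndistnsc)$ as the joint probability of $\{\NDIST{i}{j}>\ndistnsc\}$ and the association event (via the void-probability/independence argument from Lemma \ref{lem:aspr}) divided by $\passoc_{ij}$, and then differentiates, whereas you write the same joint density directly at the infinitesimal level and normalize by $\passoc_{ij}$. The bookkeeping you flag for the exponents $2/\nple{m}{k}$ and the absorption of the $(i,j)$ factor into $G_{ij}(i,j)=\dnstysrat{i}{j}$ is exactly right, so there is no gap.
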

\begin{proof}
See Appendix \ref{sec:proofndist}.
\end{proof}
The following lemma gives the $\SINR$ CCDF/coverage  over the entire network.
\addtocounter{equation}{1}
\begin{lem}\label{lem:pcov}
The $\SINR$  coverage  of a typical user is given by (\ref{eq:pcov}) (at the top of page)
where  \small
\begin{equation*}
 D_{ij}(k,\SINRthresh_{ij})\\ = \npower{i}{k}^{2/\ple{i}{k}}\left\{\tierdnsty{i}{k}\Z\left(\SINRthresh_{ij},\ple{i}{k},\nmetric{i}{k}\npower{i}{k}^{-1}\right) + \dnstycrat{i}{k}\Z(\SINRthresh_{ij},\ple{i}{k},0)\right\} ,\end{equation*}\normalsize
\[G_{ij}(m,k)= \dnstysrat{m}{k} \nmetric{m}{k}^{2/\ple{m}{k}},\,\,\, \Z(a,b,c)= a^{2/b}\int_{(\frac{c}{a})^{2/b}}^\infty \frac{\mathrm{d} u}{ 1+ u^{b/2}}\,\,,\]
$ \mathrm{ and }\,\, \SNR_{ij}(y)= \frac{\power{i}{j}\ndistnsc^{-\ple{i}{j}}}{\noisepower{i}}.$
\end{lem}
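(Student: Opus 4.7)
The plan is to decompose the overall $\SINR$ coverage according to which class serves the typical user, and for each serving class $(i,j)$, condition on the distance $\NDISTc{i}{j}=y$ to the tagged AP (whose PDF is supplied by Lemma \ref{lem:ndist}). Since $\pcov = \sum_{(i,j)\in \setopen{V}{}} \pcov_{ij}(\SINRthresh_{ij})\passoc_{ij}$, the factor $\passoc_{ij}$ will cancel the normalization in Lemma \ref{lem:ndist}, turning the weighted integral into the display in (\ref{eq:pcov}). What remains is to compute the conditional coverage $\Pr(\SINR_{ij}(y) > \SINRthresh_{ij}\mid y)$ in closed form and recognize the two terms in the exponent of (\ref{eq:pcov}) as (a) contributions from within-RAT interference/noise and (b) contributions inherited from the distance density.

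The conditional coverage is handled by exploiting Rayleigh fading. Writing the event $\{\SINR > \SINRthresh_{ij}\}$ as $\{h_y > \SINRthresh_{ij}(\sum_k I_{ik}+\sigma_i^2)/(\power{i}{j} y^{-\ple{i}{j}})\}$ and using $h_y\sim\exp(1)$, we get
\[
\Pr(\SINR_{ij}(y)>\SINRthresh_{ij}\mid y) \;=\; e^{-\SINRthresh_{ij}/\SNR_{ij}(y)}\,\prod_{k\in \set{V}_i}\mathcal{L}_{I_{ik}}\!\left(\tfrac{\SINRthresh_{ij}\,y^{\ple{i}{j}}}{\power{i}{j}}\right),
\]
by the independence of noise, fading, and the disjoint interfering PPPs across tiers of RAT~$i$. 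Only APs of RAT~$i$ enter the interference, so PPPs from other RATs do not appear here (they already shaped the distance distribution via Lemma \ref{lem:ndist}).

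For each tier $k\in\set{V}_i$ I would split the Laplace transform into the open-access contribution from $\PPPsrat{i}{k}$ and the closed-access contribution from $\PPPcrat{i}{k}$. The association rule forces an exclusion ball for the open-access class: no point of $\PPPsrat{i}{k}$ can lie closer than $r_0=\nmetric{i}{k}^{1/\ple{i}{k}}\,y^{1/\nple{i}{k}}$ (otherwise it would outrank the tagged AP); the closed-access process has no such exclusion. Applying the PGFL of a PPP with Rayleigh marks yields integrals of the form $2\pi\lambda\int_{r_0}^\infty \!\big(1-(1+sP r^{-\alpha})^{-1}\big)r\,dr$, which after the change of variable $u=r^2/(sP)^{2/\alpha}$ produce exactly $\pi\,\lambda\,\npower{i}{k}^{2/\ple{i}{k}}\,y^{2/\nple{i}{k}}\,\Z(\SINRthresh_{ij},\ple{i}{k},c)$ with $c=\nmetric{i}{k}\npower{i}{k}^{-1}$ (open) or $c=0$ (closed). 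Summing over the two access types reproduces $D_{ij}(k,\SINRthresh_{ij})$, and the product over $k$ becomes the first bracketed sum in (\ref{eq:pcov}).

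Finally, I would assemble the pieces: multiply $\Pr(\SINR_{ij}(y)>\SINRthresh_{ij}\mid y)$ by $f_{\NDISTc{i}{j}}(y)$ from Lemma \ref{lem:ndist}, integrate in $y$, and multiply by $\passoc_{ij}$; the $\passoc_{ij}$ cancels the prefactor of the density, leaving $2\pi\dnstysrat{i}{j}$, and the exponent from the distance PDF contributes the second bracketed sum $\sum_{(m,k)\in\setopen{V}{}} G_{ij}(m,k)y^{2/\nple{m}{k}}$. Summing over serving classes $(i,j)$ gives (\ref{eq:pcov}). The main obstacle is bookkeeping: correctly identifying which APs are interferers (same RAT only) versus which APs merely constrain the serving distance (all open-access classes across all RATs), and getting the exclusion radius for open-access interferers from the weighted association inequality. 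Once that split is clean, the remaining steps are standard Laplace transform manipulations for Poisson shot noise.
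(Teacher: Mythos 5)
Your proposal is correct and follows essentially the same route as the paper's proof: condition on the serving class and the tagged-AP distance from Lemma \ref{lem:ndist}, use the Rayleigh-fading exponential to reduce conditional coverage to a product of interference Laplace transforms over the tiers of the serving RAT, evaluate each via the PPP PGFL with the association-induced exclusion radius $\nmetric{i}{k}^{1/\ple{i}{k}}y^{1/\nple{i}{k}}$ for open access (and none for closed access), and let $\passoc_{ij}$ cancel the normalization of the distance PDF before summing over $(i,j)$. No gaps.
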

\begin{IEEEproof}
See Appendix \ref{sec:proofpcov}.
\end{IEEEproof}
The result in Lemma \ref{lem:pcov} is for the most general case and involves a single numerical integration  along with a lookup table for $\Z$.  Lemma \ref{lem:pcov} reduces to the earlier derived $\SINR$ coverage expressions in \cite{andganbac11} for $M=K=1$ (single tier, single RAT) and those in  \cite{josanxiaand12} for   $M=1$ (single RAT, multiple tiers).
\subsection{Main Result}
Having characterized the distribution of load and $\SINR$, we now derive the rate distribution over the whole network.
\addtocounter{equation}{1}
\begin{thm}\label{thm:rcov}
The rate coverage  of a randomly located mobile user in the general HetNet setting of Section  \ref{sec:sysmodel}  is given by (\ref{eq:ncov}) (at the top of  page) where $\RATEthresh_{ij}$ is the rate threshold for $(i,j)$,    $\nRATEthresh_{ij}\triangleq\RATEthresh_{ij}/\res{i}{j}$, and $\uRATEthresh(x) \triangleq 2^{x}-1$.
\end{thm}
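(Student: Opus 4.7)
The plan is to combine the rate model \eqref{eq:ratemodel}, the load distribution from Lemma \ref{lem:oloadpgf}, the distance distribution from Lemma \ref{lem:ndist}, and the conditional $\SINR$ coverage formula from Lemma \ref{lem:pcov}, by a tower expansion that conditions on the serving class, then on the load at the tagged AP, and finally on the distance to the tagged AP.

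First, I would decompose the rate coverage over the serving class via the law of total probability and the association probabilities from Lemma \ref{lem:aspr}:
\[
\psys \;=\; \sum_{(i,j)\in\setopen{V}{}} \passoc_{ij}\,\pr\bigl(\rate{i}{j}>\RATEthresh_{ij}\,\big|\,\text{user served by }(i,j)\bigr).
\]
Second, using \eqref{eq:ratemodel}, I would invert the rate threshold into an $\SINR$ threshold that depends on the load: $\rate{i}{j}>\RATEthresh_{ij}$ iff $\SINR_{ij}>\uRATEthresh\bigl(\nRATEthresh_{ij}\load{i}{j}\bigr)$, where $\load{i}{j}=\oload{i}{j}+1$ since the typical user contributes an extra unit of load at the tagged AP.

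Third, I would condition on $\oload{i}{j}=n$ and on the user-to-tagged-AP distance $\NDISTc{i}{j}=y$, invoking the (approximate) independence of the load at the tagged AP from the $\SINR$ given the distance $y$; this is the same decoupling implicit in Lemmas \ref{lem:oloadpgf}--\ref{lem:pcov}. Then
\[
\pr(\rate{i}{j}>\RATEthresh_{ij})=\sum_{n\ge0} p_n^{(ij)}\!\int_0^\infty \pr\!\left(\SINR_{ij}(y)>\uRATEthresh(\nRATEthresh_{ij}(n+1))\right)f_{\NDISTc{i}{j}}(y)\,\mathrm{d}y,
\]
where $p_n^{(ij)}=\pr(\oload{i}{j}=n)$ is extracted from the negative-binomial-type PGF of Lemma \ref{lem:oloadpgf} by the generalized binomial expansion
\[
\left(3.5+c(1-z)\right)^{-4.5}=(3.5+c)^{-4.5}\sum_{n\ge0}\frac{\Gamma(n+4.5)}{n!\,\Gamma(4.5)}\Bigl(\tfrac{c}{3.5+c}\Bigr)^{n}z^{n},
\]
with $c=\userdnsty\passoc_{ij}/\dnstysrat{i}{j}$; using $\Gamma(4.5)=3.5\,\Gamma(3.5)$ turns the prefactor into the exact form appearing in \eqref{eq:ncov}.

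Fourth, for fixed $n$ I would substitute the conditional $\SINR$ coverage integrand from the proof of Lemma \ref{lem:pcov} (with threshold $\uRATEthresh(\nRATEthresh_{ij}(n+1))$ in place of $\SINRthresh_{ij}$) and the explicit density $f_{\NDISTc{i}{j}}(y)=\frac{2\pi\dnstysrat{i}{j}}{\passoc_{ij}}y\exp\bigl\{-\pi\sum_{(m,k)}G_{ij}(m,k)y^{2/\nple{m}{k}}\bigr\}$ from Lemma \ref{lem:ndist}. Multiplying by $\passoc_{ij}$ cancels the $1/\passoc_{ij}$ in $f_{\NDISTc{i}{j}}$, leaving the $2\pi\dnstysrat{i}{j}$ prefactor and the desired exponential kernel; interchanging sum and integral (justified by monotone convergence, since all summands are nonnegative) and then summing over $(i,j)$ yields \eqref{eq:ncov}.

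The main obstacle is the load–$\SINR$ decoupling in step three: the tagged-AP distance $\NDISTc{i}{j}$ and the association area $\area_{ij}$ (hence the load) are truly correlated, since the user is biased toward being in a larger cell. The theorem relies on the area approximation \eqref{eq:area_approx} together with Lemma \ref{lem:oloadpgf}, which together absorb this correlation into the marginal distribution of $\oload{i}{j}$ so that the remaining randomness can be treated as conditionally independent of the $\SINR$ given $y$. Everything else is a mechanical combination of Lemmas \ref{lem:aspr}--\ref{lem:pcov} with the negative-binomial expansion.
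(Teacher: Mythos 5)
Your proposal is correct and follows essentially the same route as the paper's proof: decompose over the serving class with $\passoc_{ij}$, convert the rate threshold into the load-dependent $\SINR$ threshold $\uRATEthresh(\nRATEthresh_{ij}(n+1))$ with $\load{i}{j}=1+\oload{i}{j}$, average over the negative-binomial load law of Lemma \ref{lem:oloadpgf} (your binomial-series extraction of the pmf, using $\Gamma(4.5)=3.5\,\Gamma(3.5)$, reproduces exactly the mass function the paper obtains by differentiating the PGF), and then plug in the conditional coverage of Lemma \ref{lem:pcov} so that the $1/\passoc_{ij}$ in the distance density of Lemma \ref{lem:ndist} cancels against the association probability. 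Your explicit remark that the load--$\SINR$ decoupling given the serving distance is an approximation inherited from the area approximation and Lemma \ref{lem:oloadpgf} is a fair reading of an assumption the paper leaves implicit, but it does not change the argument.
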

\begin{IEEEproof}
Using (\ref{eq:ratemodel}), the probability that the rate requirement of a user associated with $(i,j)$ is met is
\begin{align}
\pr(\rate{i}{j}> \RATEthresh_{ij})&= \pr\left(\frac{\res{i}{j}}{\load{i}{j}}\log(1+\SINR_{ij})>\RATEthresh_{ij}\right)\nonumber\\
&=\pr (\SINR_{ij} > 2^{\RATEthresh_{ij} \load{i}{j}/\res{i}{j}} -1)\\
& = \cexpect{\pcov_{ij}\left(\uRATEthresh(\nRATEthresh_{ij}\load{i}{j})\right)}{\load{i}{j}},\label{eq:rcovij}
\end{align}
where
$\uRATEthresh(\nRATEthresh_{ij}\load{i}{j}) = 2^{\RATEthresh_{ij}\load{i}{j}/\res{i}{j}}-1 $
and $\load{i}{j} = 1+ \oload{i}{j},$ i.e., the load at the tagged AP  equals the typical user plus the \textit{other} users.
Using Lemma  \ref{lem:oloadpgf}, (\ref{eq:rcovij}) is simplified as
\begin{align}
&\cexpect{\pcov_{ij}\left(\uRATEthresh(\nRATEthresh_{ij}\load{i}{j})\right)}{\load{i}{j}}\nonumber\\
&= \sum_{n\ge 0}\pr(\oload{i}{j}=n)\pcov_{ij}\left(\uRATEthresh(\nRATEthresh_{ij}(n+1))\right)\\
&= \sum_{n\ge 0}\frac{3.5^{3.5}}{n!}\frac{\Gamma(n+4.5)}{\Gamma(3.5)}\left(\frac{\userdnsty\passoc_{ij}}{\dnstysrat{i}{j}}\right)^n\nonumber\\
&\phantom{"="}\times\left(3.5 + \frac{\userdnsty\passoc_{ij}}{\dnstysrat{i}{j}}\right)^{-(n+4.5)}\pcov_{ij}\left(\uRATEthresh(\nRATEthresh_{ij}(n+1))\right).
\end{align}
Using the law of total probability, the rate coverage is
\small
\begin{align}
\psys & =  \sum_{(i,j) \in \setopen{V}{}} \passoc_{ij} \pr(\rate{i}{j} > \RATEthresh_{ij})=\sum_{(i,j) \in \setopen{V}{}}\passoc_{ij}\sum_{n\ge 0}\frac{3.5^{3.5}}{n!}\frac{\Gamma(n+4.5)}{\Gamma(3.5)}\nonumber\\
&\times\left(\frac{\userdnsty\passoc_{ij}}{\dnstysrat{i}{j}}\right)^n\left(3.5 + \frac{\userdnsty\passoc_{ij}}{\dnstysrat{i}{j}}\right)^{-(n+4.5)}\pcov_{ij}\left(\uRATEthresh(\nRATEthresh_{ij}(n+1))\right).
\end{align}
\normalsize
Using Lemma \ref{lem:pcov} in the above equation gives the desired result.
\end{IEEEproof}
The rate distribution expression for the most general setting requires a single numerical integral and use of lookup tables for $\Z$ and $\Gamma$.  Since both the terms $\pr(\load{i}{j}=n)$ and $\pcov_{ij}\left(\uRATEthresh(n)\right)$ decay rapidly for large $n$, the summation over $n$ in Theorem \ref{thm:rcov} can be accurately approximated as a finite summation  to a sufficiently large value, ${N}_\text{max}$. We found  ${N}_\text{max} = 4\userdnsty$ to be sufficient for  results presented in Section \ref{sec:validation}.

\subsection{Mean Load Approximation}
The rate coverage expression can be further simplified (sacrificing accuracy) if the load at each AP of  $(i,j)$ is assumed equal to its mean.
\addtocounter{equation}{1}
\begin{cor}\label{cor:rcovmeanload}
Rate coverage with the mean load approximation is given by (\ref{eq:rcovmeanload}) (at the top of  page),
where
\begin{equation*}
\avload{i}{j}= \expect{\load{i}{j}} = 1+ \frac{1.28\userdnsty\passoc_{ij}}{\dnstysrat{i}{j}}.
\end{equation*}
\end{cor}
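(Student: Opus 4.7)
The plan is to derive the expression by short-circuiting the expectation over $\load{i}{j}$ in the proof of Theorem \ref{thm:rcov} and replacing it with a deterministic substitution. Recall that in the proof of Theorem \ref{thm:rcov} we had
\[
\pr(\rate{i}{j} > \RATEthresh_{ij}) = \cexpect{\pcov_{ij}\!\left(\uRATEthresh(\nRATEthresh_{ij}\load{i}{j})\right)}{\load{i}{j}},
\]
and the summation over $n$ in (\ref{eq:ncov}) arose from unfolding this expectation via the PGF of $\load{i}{j}$ in Lemma \ref{lem:oloadpgf}. The mean load approximation simply replaces the random variable $\load{i}{j}$ inside $\pcov_{ij}(\uRATEthresh(\nRATEthresh_{ij}\cdot))$ by its mean $\avload{i}{j}$, i.e., it treats $\pcov_{ij}\circ\uRATEthresh$ as effectively constant in a neighborhood of $\expect{\load{i}{j}}$. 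With that substitution the expectation disappears and we get $\pr(\rate{i}{j}>\RATEthresh_{ij}) \approx \pcov_{ij}(\uRATEthresh(\nRATEthresh_{ij}\avload{i}{j}))$.

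The rest is assembly. Invoking Lemma \ref{lem:pcov} with the $\SINR$ threshold set to $\uRATEthresh(\nRATEthresh_{ij}\avload{i}{j})$ gives the integral in (\ref{eq:rcovmeanload}); summing against $\passoc_{ij}$ via the law of total probability $\bar{\psys} = \sum_{(i,j)\in\setopen{V}{}} \passoc_{ij}\pr(\rate{i}{j}>\RATEthresh_{ij})$ and noting that the $2\pi\dnstysrat{i}{j}/\passoc_{ij}$ prefactor of $f_{\NDISTc{i}{j}}$ (Lemma \ref{lem:ndist}) inherited through Lemma \ref{lem:pcov} cancels the outer $\passoc_{ij}$, leaves the clean coefficient $2\pi\dnstysrat{i}{j}$ in front of the integral.

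For the closed-form mean load, I would use the moment formula of Lemma \ref{lem:oloadpgf} at $n=1$. Since $\stirling{1}{1}=1$, this reduces to
\[
\expect{\oload{i}{j}} = \frac{\userdnsty\passoc_{ij}}{\dnstysrat{i}{j}}\,\expect{\area^{2}(1)},
\]
and $\expect{\area^{2}(1)} \approx 1.28$ is the known second moment of the area of a typical Poisson Voronoi cell of unit density from \cite{Gil62}. Because the tagged AP carries the typical user plus $\oload{i}{j}$ others, $\avload{i}{j} = 1 + \expect{\oload{i}{j}}$ gives the stated expression.

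There is no genuinely hard step here, since all the heavy lifting is already in Theorem \ref{thm:rcov} and Lemma \ref{lem:oloadpgf}; the main subtlety is conceptual, namely that this is an uncontrolled approximation rather than an identity, so some care is warranted in pointing out that the substitution is justified only heuristically (e.g., by noting that $\pcov_{ij}\circ\uRATEthresh\circ(\nRATEthresh_{ij}\cdot)$ is a smooth decreasing function of $n$ and $\load{i}{j}$ concentrates around its mean when $\userdnsty\passoc_{ij}/\dnstysrat{i}{j}$ is moderate). This concentration-based heuristic is the only thing worth flagging; the formal manipulation is immediate once the substitution is made.
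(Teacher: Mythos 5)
Your proposal is correct and follows essentially the same route as the paper's proof: approximate the conditional expectation in (\ref{eq:rcovij}) by $\pcov_{ij}\left(\uRATEthresh(\nRATEthresh_{ij}\expect{\load{i}{j}})\right)$, obtain $\expect{\load{i}{j}} = 1 + \frac{\userdnsty\passoc_{ij}}{\dnstysrat{i}{j}}\expect{\area^2(1)}$ with $\expect{\area^2(1)}=1.28$ from the moment formula of Lemma \ref{lem:oloadpgf}, and assemble via Lemma \ref{lem:pcov} and the law of total probability. Your explicit remarks on the $\passoc_{ij}$ cancellation and the heuristic (concentration-based) nature of the substitution are accurate elaborations of steps the paper leaves implicit.
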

\begin{IEEEproof}
Lemma \ref{lem:oloadpgf} gives the first moment of load as
$\expect{\load{i}{j}} = 1+ \expect{\oload{i}{j}}= 1+ \frac{\userdnsty\passoc_{ij}}{\dnstysrat{i}{j}}\expect{\area^2(1)}$ where
 $\expect{\area^2(1)}=1.28$ \cite{Gil62}. Using an approximation for (\ref{eq:rcovij}) with
$\cexpect{\pcov_{ij}\left(\uRATEthresh(\nRATEthresh_{ij}\load{i}{j})\right)}{\load{i}{j}} \approx
\pcov_{ij}\left(\uRATEthresh(\nRATEthresh_{ij}\expect{\load{i}{j}})\right)$, the simplified rate coverage expression is obtained.
\end{IEEEproof}
The mean load approximation above simplifies the rate coverage expression by eliminating the summation over $n$. The numerical integral can also be eliminated in certain plausible scenarios given in the following corollary.
\begin{cor}\label{cor:simnwcov}
In interference limited scenarios ($\noisepower{}\to 0$) with mean load approximation and with same path loss exponents (${\nple{m}{k}} \equiv 1$), the rate coverage is
\small
\begin{equation}
\bar{\psys}= \sum_{(i,j) \in \setopen{V}{}} \frac{\dnstysrat{i}{j}}{ \sum_{k\in \set{V}_i}D_{ij}(k,\uRATEthresh(\nRATEthresh_{ij}\avload{i}{j}))  + \sum_{\mk \in \setopen{V}{}}G_{ij}(m,k)}.
\end{equation}
\normalsize
\end{cor}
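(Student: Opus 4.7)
The plan is to start from the mean load rate coverage expression~(26) given in Corollary~\ref{cor:rcovmeanload}, apply the two simplifying hypotheses (interference-limited regime and common path loss exponents), and then reduce the remaining radial integral to a Gaussian moment that can be evaluated in closed form.

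First I would dispatch the noise. Since $\SNR_{ij}(y)=\power{i}{j}y^{-\ple{i}{j}}/\noisepower{i}$, the interference-limited limit $\noisepower{i}\to 0$ sends $\SNR_{ij}(y)\to\infty$ for every fixed $y>0$, so the additive term $\uRATEthresh(\nRATEthresh_{ij}\avload{i}{j})/\SNR_{ij}(y)$ in the exponent of~(26) vanishes. (The threshold $\uRATEthresh(\nRATEthresh_{ij}\avload{i}{j})$ is a finite deterministic quantity once the mean load is substituted, so there is no uniformity issue.) Next, under $\nple{m}{k}\equiv 1$, i.e.\ $\ple{m}{k}=\ple{i}{j}$ for every class, we have $\ndistnsc^{2/\nple{m}{k}}=\ndistnsc^{2}$ in every summand of the exponent. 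Collecting the coefficients of $\ndistnsc^{2}$, the inner integrand over $\ndistnsc$ for the class $(i,j)$ reduces to
\begin{equation*}
\ndistnsc\exp\!\bigl(-\pi A_{ij}\,\ndistnsc^{2}\bigr),\qquad A_{ij}\triangleq \sum_{k\in \set{V}_i} D_{ij}(k,\uRATEthresh(\nRATEthresh_{ij}\avload{i}{j}))+\sum_{\mk\in\setopen{V}{}}G_{ij}(m,k).
\end{equation*}

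Then I would evaluate $\int_{0}^{\infty}\ndistnsc\exp(-\pi A_{ij}\ndistnsc^{2})\,\mathrm{d}\ndistnsc=\frac{1}{2\pi A_{ij}}$ by the standard substitution $u=\pi A_{ij}\ndistnsc^{2}$. Multiplying by the prefactor $2\pi\dnstysrat{i}{j}$ gives exactly $\dnstysrat{i}{j}/A_{ij}$ for each summand, and summing over $(i,j)\in\setopen{V}{}$ yields the claimed expression. Implicitly $A_{ij}>0$ since $G_{ij}(i,j)=\dnstysrat{i}{j}>0$, so the integral converges and no boundary issue arises.

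There is no real obstacle here, the argument is a direct simplification; the only bookkeeping subtlety is confirming that the exponent of the general expression (19)/(26) collapses cleanly when $\nple{m}{k}\equiv 1$, and that the two sums inside $A_{ij}$ retain their original $D_{ij}$ and $G_{ij}$ dependencies without further change. One could also note, as a consistency check, that this matches the known single-tier single-RAT interference-limited rate coverage of~\cite{andganbac11} when $\|\set{V}\|=1$, in which case $A_{ij}$ collapses to $\dnstysrat{i}{j}(1+\Z(\uRATEthresh(\nRATEthresh_{ij}\avload{i}{j}),\ple{i}{j},1))$, giving the familiar form.
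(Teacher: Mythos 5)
Your proposal is correct and is essentially the argument the paper intends: Corollary~\ref{cor:simnwcov} is an immediate specialization of Corollary~\ref{cor:rcovmeanload}, obtained by dropping the $\uRATEthresh(\nRATEthresh_{ij}\avload{i}{j})/\SNR_{ij}(y)$ term as $\noisepower{}\to 0$, setting every exponent $\ndistnsc^{2/\nple{m}{k}}=\ndistnsc^{2}$ under $\nple{m}{k}\equiv 1$, and evaluating $2\pi\dnstysrat{i}{j}\int_{0}^{\infty}\ndistnsc\exp(-\pi A_{ij}\ndistnsc^{2})\,\mathrm{d}\ndistnsc=\dnstysrat{i}{j}/A_{ij}$, exactly the same reduction the paper performs for $\SIR$ coverage in the proof of Proposition~\ref{prop:optbias}. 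Your consistency check against the single-class case and the remark that $A_{ij}>0$ guarantees convergence are correct as well.
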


In the above analysis, rate distribution is presented as a function of association weights. So, in principle, it is possible to find the optimal association weights and hence the optimal fraction of  traffic to be offloaded to each RAT so as to maximize the rate coverage. This aspect is studied in a special case of a two-RAT network in Section \ref{sec:opt_offload}.

\subsection{Validation}\label{sec:validation}
In this section, the emphasis is on validating the area and mean load approximations proposed  for rate coverage and on validating the PPP as a  suitable AP location model.  In all the simulation results, we consider a square window of $20\times20$ km$^2$. The AP locations are drawn from  a PPP or a real deployment or a square grid  depending upon the scenario that is being simulated. The typical user is assumed to be located at the origin. The serving AP for this user (tagged AP) is determined by (\ref{eq:association}). The received $\SINR$ can now be evaluated as being the ratio of the power received from the serving AP and the sum of the powers received from the rest of the APs as given in (\ref{eq:sinr}). The rest of the users are assumed to form a realization of an independent PPP. The serving AP of each user is again determined by (\ref{eq:association}), which provides the total load on the tagged AP in terms of the number of users it is serving. The rate of the typical user is then computed according to (\ref{eq:ratemodel}). In each Monte-Carlo trial, the user locations, the base station locations, and the channel gains are independently generated. The rate distribution is obtained by simulating  $10^5$ Monte-Carlo trials.

In the discussion that follows we use a specific form of the association weight as $\metric{m}{k} =  \power{m}{k}\bias{m}{k}$ corresponding to the biased received power based  association \cite{qcom_hetnet_wmag}, where $\bias{m}{k}$ is the association bias for $(m,k)$.  The effective resources  at an AP are assumed to be uniformly $\res{m}{k}\equiv10$ MHz and equal  rate thresholds are assumed for all classes. Thermal noise is ignored. Also, without any loss of generality the bias of $(1,1)$ is normalized to 1, or $\bias{1}{1}=0$ dB.

\subsubsection{Analysis}
Our goal here is to validate   the area approximation  and the mean load approximation (Theorem \ref{thm:rcov} and  Corollary \ref{cor:rcovmeanload}, respectively) in the context of rate coverage. A scenario with two-RATs, one with a single open access tier and the other with two tiers -- one open and one closed access -- is considered first. In this case, $\set{V}=\{(1,1);(2,3);(2,3^{'})\}$, $\dnstysrat{1}{1}=1$ BS/km$^2$, $\dnstysrat{2}{3}=\dnstycrat{2}{3}=10$ BS/km$^2$, $\userdnsty = 50$ users/km$^2$, $\ple{1}{1}=3.5$, and  $\ple{1}{3}=4$. Fig. \ref{fig:sim_offload_twotier_closedaccess} shows the rate distribution obtained through simulation and that from Theorem \ref{thm:rcov} and Corollary \ref{cor:rcovmeanload} for two values of association biases. Fig. \ref{fig:sim_offload_tworat_twotier} shows the the rate distribution in a two-RAT three-tier setting  with $\set{V}=\{(1,1);(1,2);(2,2);(2,3)\}$, $\dnstysrat{1}{1}=1$ BS/km$^2$, $\dnstysrat{1}{2}=\dnstysrat{2}{2}=5$ BS/km$^2$, $\dnstysrat{2}{3} = 10$ BS/km$^2$, $\userdnsty = 50$ users/km$^2$, $\ple{1}{1}=3.5$, $\ple{1}{2}=3.8$, and  $\ple{1}{3}=4$ for two values of association bias of $(2,3)$. In both cases, $\bias{1}{2} = \bias{2}{2} = 5$ dB.

As it can be observed from both the plots, the analytic distributions obtained from  Theorem \ref{thm:rcov} and Corollary \ref{cor:rcovmeanload} are in quite good agreement  with the simulated  one and thus validate the analysis.  See \cite{SinDhiAndICC13} for validation of  a three-RAT scenario.
\subsubsection{Spatial Location Model}
To simulate a realistic spatial location model for a two-RAT scenario, the cellular BS location data of a major metropolitan city used in \cite{andganbac11} is overlaid with that of an actual WiFi deployment \cite{googlemvwifi}. Along with the  PPP,
a square grid based location model in which the APs for both the RATs are located in a square lattice (with different densities) is also used in the following comparison.
Denoting the macro tier as $(1,1)$ and  WiFi APs as $(2,3)$, $\set{\set{V}}=\{(1,1);(2,3)\}$ in this setup. The superposition is done such that $\dnstysrat{2}{3}=10\dnstysrat{1}{1}$. Fig. \ref{fig:mvdalls_ratecovcombtwc} shows the rate distribution of a typical user obtained from the real data along with that of a square grid based model and that from a PPP,  Theorem \ref{thm:rcov}, for three cases.  As evident from  the plot, Theorem \ref{thm:rcov} is quite accurate in the context of rate distribution with regards to the actual location data.
\begin{figure}
	\centering
		\includegraphics[width=\columnwidth]{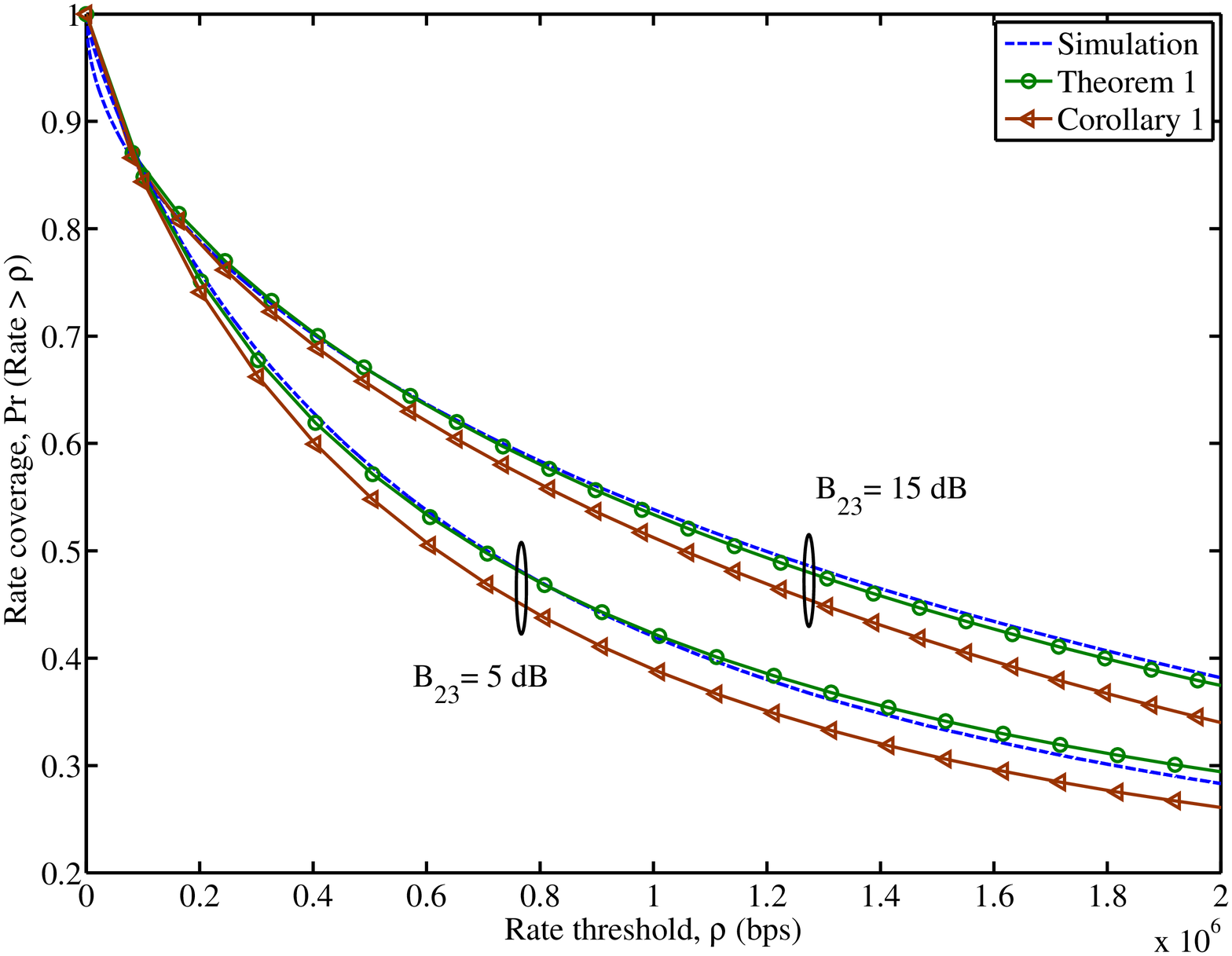}
		\caption{Comparison of rate distribution obtained from simulation, Theorem 1, and Corollary 1 for $\dnstysrat{2}{3}=\dnstycrat{2}{3}=10\dnstysrat{1}{1}$, $\ple{1}{1}=3.5 $, and  $\ple{2}{3} = 4$.}
	\label{fig:sim_offload_twotier_closedaccess}
\end{figure}
\begin{figure}
	\centering
		\includegraphics[width=\columnwidth]{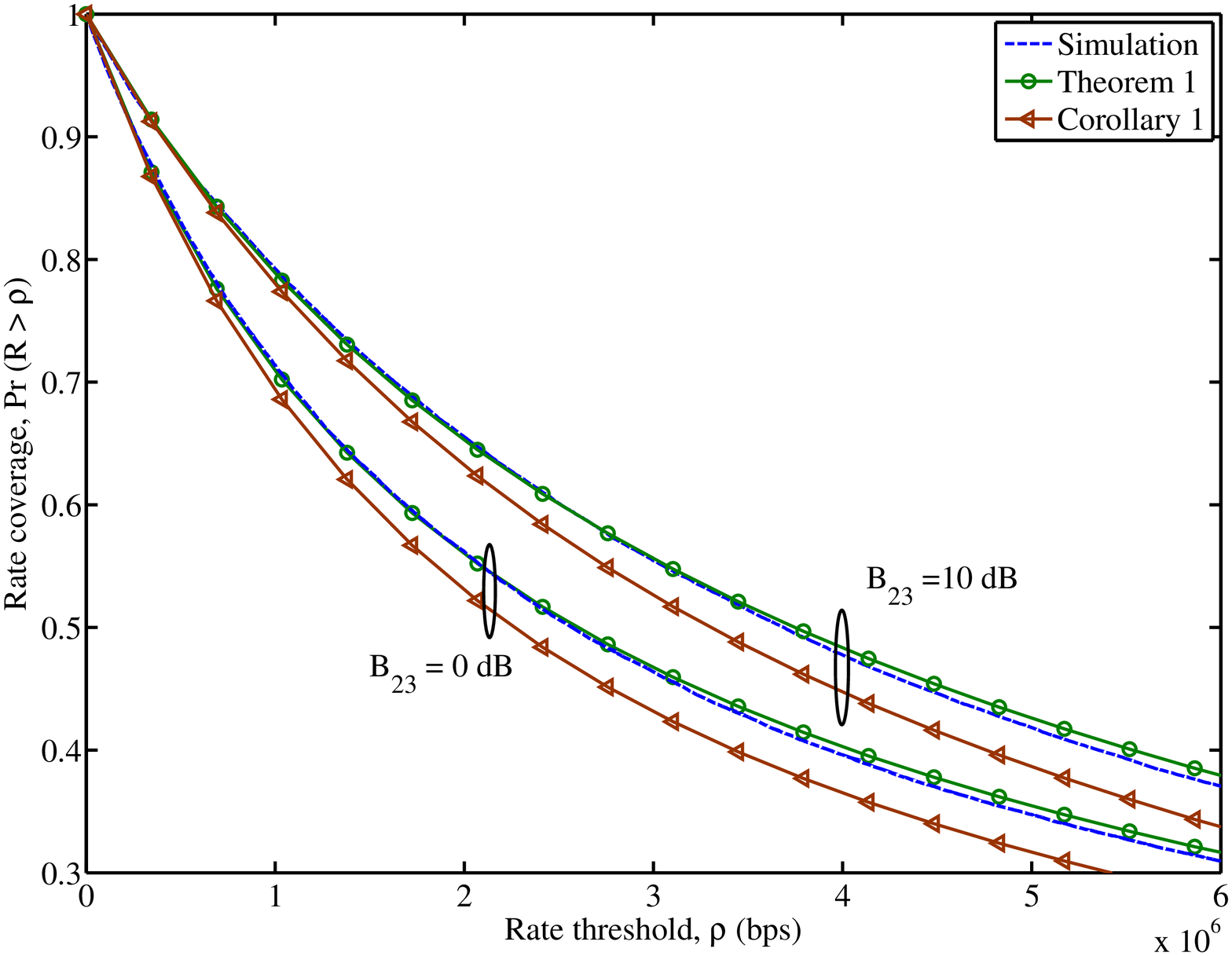}
		\caption{Comparison of rate distribution obtained from simulation, Theorem 1, and Corollary 1 for $\dnstysrat{1}{2}=\dnstysrat{2}{2}=5\dnstysrat{1}{1}$, $\dnstysrat{2}{3}=10\dnstysrat{1}{1}$, $\ple{1}{1}=3.5 $, $\ple{1}{2}=3.8$, and  $\ple{2}{3} = 4$.}
	\label{fig:sim_offload_tworat_twotier}
\end{figure}
\begin{figure}
	\centering
		\includegraphics[width=\columnwidth]{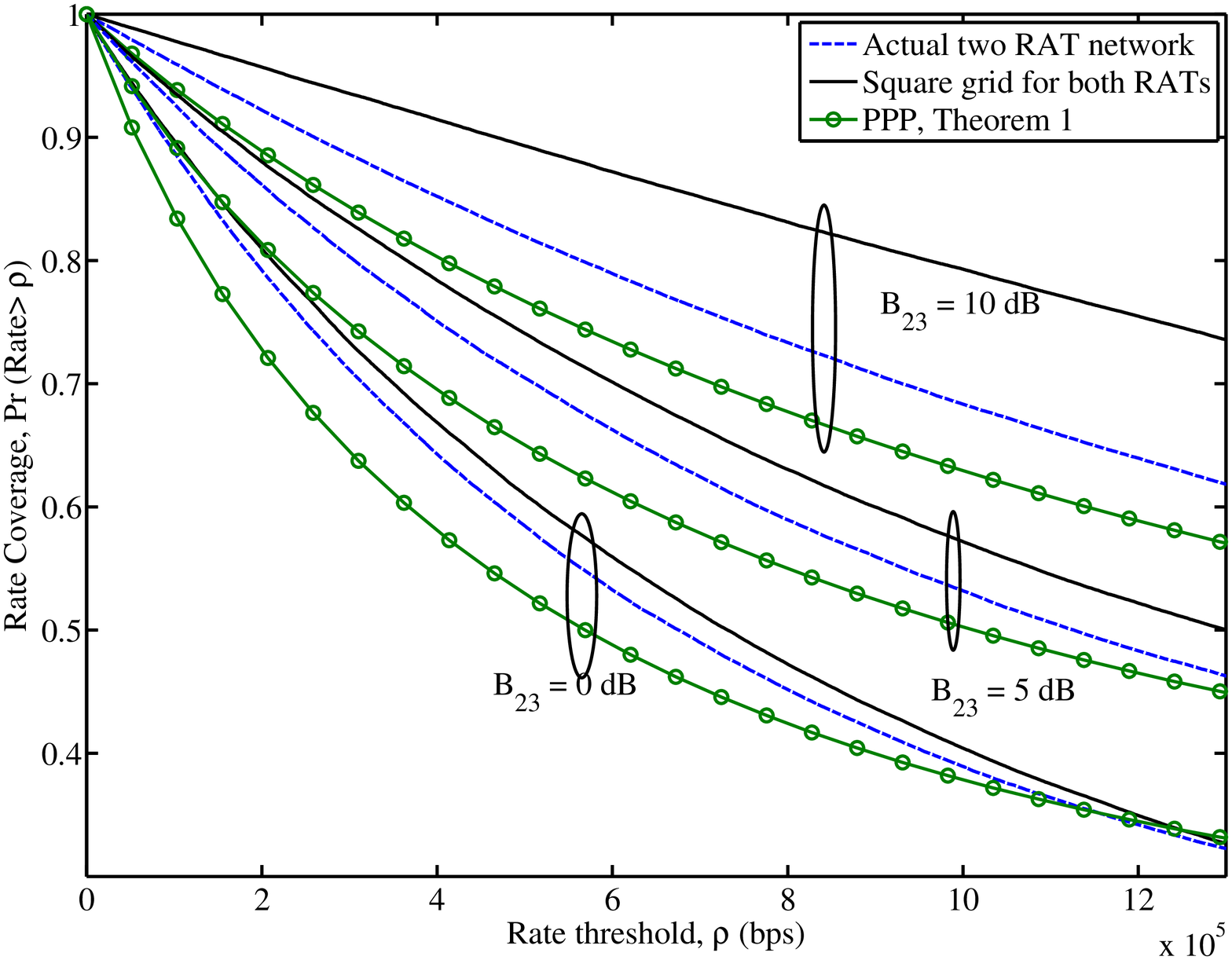}
		\caption{Rate distribution comparison for the three spatial location models: real, grid, and PPP for a two-RAT setting with $\dnstysrat{2}{3}=10\dnstysrat{1}{1}$ and $\ple{1}{1}=\ple{2}{3}=4$}
	\label{fig:mvdalls_ratecovcombtwc}
\end{figure}
\section{Design of Optimal offload}\label{sec:opt_offload}
In this section, we consider the design of optimal offloading   under a specific form of the association weight as  $\metric{m}{k}=\power{m}{k}\bias{m}{k}$.
For general settings, the optimum association biases $\{\bias{m}{k}\}$  for $\SINR$ and rate coverage can be found using the derived expressions of  Lemma 5 and Theorem 1 respectively. As discussed in  Section III-E, simplified expression of Corollary 1 can also be used for rate coverage.
We consider below a two-RAT single tier scenario with $q^\text{th}$ tier of RAT-1 overlaid with $r^\text{th}$ tier of RAT-2, i.e., $\set{V}=\{(1,q);(2,r)\}$. Optimal association  bias and optimal traffic offload  fraction is  investigated here in the  context of both the $\SIR$  coverage (i.e., neglecting noise)  and  rate coverage.

\subsection{Offloading for Optimal \texorpdfstring{$\SIR$}{SIR} Coverage}
\begin{prop}\label{prop:optbias}
Ignoring thermal noise (interference limited scenario, $\noisepower{} \to 0$), assuming equal path loss coefficients ($\nple{m}{k}\equiv 1$),   the value of association  bias  $\frac{\bias{2}{r}}{\bias{1}{q}}$  maximizing $\SIR$ coverage is
\begin{equation}
b_\mathrm{opt}=\frac{\power{1}{q}}{\power{2}{r}}\left(\frac{\Z(\SINRthresh_{1q},\alpha,1)}
{a\Z(\SINRthresh_{2r},\alpha,1)}\right)^{\alpha/2},
\end{equation}
where  $\dnstysrat{2}{r} = a\dnstysrat{1}{q}$ and  the corresponding  optimum traffic  offload fraction to RAT-2 is
\begin{equation}
\passoc_2 = \frac{\Z(\SINRthresh_{1q},\alpha,1)}{\Z(\SINRthresh_{2r},\alpha,1)+\Z(\SINRthresh_{1q},\alpha,1)}.
\end{equation}
The corresponding $\SIR$ coverage is \begin{equation}
\frac{\Z(\SINRthresh_{2r},\alpha,1)+\Z(\SINRthresh_{1q},\alpha,1)}{\Z(\SINRthresh_{2r},\alpha,1)+\Z(\SINRthresh_{1q},\alpha,1) + \Z(\SINRthresh_{2r},\alpha,1)\Z(\SINRthresh_{1q},\alpha,1)}.
\end{equation}
\end{prop}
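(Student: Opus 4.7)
The plan is to start from the $\SIR$ coverage formula of Lemma \ref{lem:pcov}, specialize it to the interference-limited equal-exponent setting, and reduce the maximization to a one-dimensional calculus problem in a single scalar variable.

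First, I would set $\nple{m}{k}\equiv 1$ (so $\ple{m}{k}\equiv \alpha$) and let $\noisepower{}\to 0$ in (\ref{eq:pcov}). With a common path loss exponent every $\ndistnsc^{2/\nple{m}{k}}$ collapses to $\ndistnsc^{2}$, so the inner integral becomes $\int_0^\infty \ndistnsc\exp(-\pi \ndistnsc^2 C_{ij})\mathrm{d}\ndistnsc = 1/(2\pi C_{ij})$, where $C_{ij}$ collects the bracketed sum. This is the same simplification used in Corollary \ref{cor:simnwcov} and yields
\[
\pcov \;=\; \sum_{(i,j)\in\setopen{V}{}} \frac{\dnstysrat{i}{j}}{\sum_{k\in\set{V}_i} D_{ij}(k,\SINRthresh_{ij}) + \sum_{(m,k)\in\setopen{V}{}} G_{ij}(m,k)}.
\]

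Next I would substitute $\set{V}=\{(1,q),(2,r)\}$ and write out the two summands explicitly. Let $p=\power{2}{r}/\power{1}{q}$, $b=\bias{2}{r}/\bias{1}{q}$, $\Z_1=\Z(\SINRthresh_{1q},\alpha,1)$ and $\Z_2=\Z(\SINRthresh_{2r},\alpha,1)$. For the $(1,q)$ branch, interference comes only from tier $q$ of RAT-1, so $D_{1q}(q,\SINRthresh_{1q})=\dnstysrat{1}{q}\Z_1$ and $\sum G_{1q}=\dnstysrat{1}{q}+\dnstysrat{2}{r}(pb)^{2/\alpha}$; an analogous computation for the $(2,r)$ branch uses $\nmetric{1}{q}=(pb)^{-1}$. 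Introducing the single variable $x\triangleq a(pb)^{2/\alpha}$, the coverage reduces to the manifestly symmetric form
\[
\pcov(x) \;=\; \frac{1}{\Z_1+1+x} + \frac{1}{\Z_2+1+1/x}.
\]

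Then I would optimize this scalar function over $x>0$. Differentiating and setting $\mathrm{d}\pcov/\mathrm{d}x=0$ gives $(\Z_1+1+x)^2 = (x\Z_2+x+1)^2$; the positive root is $x_\text{opt}=\Z_1/\Z_2$. Back-substituting $x_\text{opt}=a(pb)^{2/\alpha}$ and solving for $b$ recovers the claimed $b_\text{opt}$. The offload fraction $\passoc_2=\passoc_{2r}$ follows from (\ref{eq:simplifiedassocpr}) as $1/(1+1/x_\text{opt})=\Z_1/(\Z_1+\Z_2)$, and substituting $x_\text{opt}$ back into $\pcov(x)$ and clearing denominators yields the stated maximum $(\Z_1+\Z_2)/(\Z_1+\Z_2+\Z_1\Z_2)$.

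The routine differentiation aside, the only genuine obstacle is certifying that $x_\text{opt}$ is a maximum rather than a minimum or saddle. This is easily handled by either a second-derivative check at $x_\text{opt}$ or by comparing $\pcov(x_\text{opt})$ with the boundary limits $\lim_{x\to 0}\pcov(x)=1/(\Z_1+1)$ and $\lim_{x\to\infty}\pcov(x)=1/(\Z_2+1)$, each of which is strictly smaller than the critical value, so the unique interior critical point is the global maximum on $(0,\infty)$.
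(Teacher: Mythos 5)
Your proposal is correct and follows essentially the same route as the paper's proof: specialize the coverage expression to the interference-limited, equal-exponent two-class case, reduce to a scalar optimization in $a(\npower{2}{r}b)^{2/\alpha}$, solve the first-order condition, and recover $\passoc_2$ via Lemma \ref{lem:aspr} and the maximum coverage by back-substitution. The only cosmetic difference is how you certify the maximum (boundary-limit comparison versus the paper's sign-of-gradient quasiconcavity argument), and both are valid.
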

\begin{proof}
See Appendix \ref{sec:proofoptbias}.
\end{proof}
The following observations can be made from the above Proposition:
\begin{itemize}
\item The optimal bias for $\SIR$ coverage  is inversely proportional to the density and transmit power of the corresponding RAT. This is because the  denser the second RAT and the higher the transmit power of the corresponding APs, the higher the interference experienced by offloaded users leading to a decrease in the optimal bias. Also, with increased density and power,  lesser bias is required to offload the same fraction of traffic.
\item The optimal fraction of traffic/user population to be offloaded to either  RAT for maximizing $\SIR$ coverage is \textit{independent} of the density and power and is solely dependent on $\SIR$ thresholds. The higher the RAT-1 threshold, $\SINRthresh_{1q}$, compared  to that of  RAT-2 threshold, $\SINRthresh_{2r}$, the more percentage of traffic is offloaded to RAT-2 as $\Z$ is a monotonically increasing function of $\SINRthresh$.  In fact, if $\SINRthresh_{1q}= \SINRthresh_{2r} $, offloading \textit{half} of the user population  maximizes $\SIR$ coverage.
\end{itemize}

\subsection{Offloading for Optimal Rate Coverage}
For the design of  optimal offloading  for rate coverage, the mean load approximation (Corollary \ref{cor:rcovmeanload}) is used.
\begin{prop}\label{prop:optbiasncov}
Ignoring thermal noise (interference limited scenario, $\noisepower{} \to 0$), assuming equal path loss coefficients ($\nple{m}{k}\equiv 1$),   the value of association  bias  $\frac{\bias{2}{r}}{\bias{1}{q}}$  maximizing rate coverage is
\small
\begin{align}
b_\mathrm{opt} & = \arg \max_{b}\Bigg\{\left(\Z(\uRATEthresh_{1q}(\nRATEthresh_{1q}\avload{1}{q}),\alpha,1)+1 +a(\npower{2}{r}b)^{2/\alpha}\right)^{-1}\nonumber\\
& + \left(\Z(\uRATEthresh_{2r}(\nRATEthresh_{2r}\avload{2}{r}),\alpha,1)+ 1 +\frac{1}{a(\npower{2}{r}b)^{2/\alpha}}\right)^{-1}\Bigg\},
\end{align}
\normalsize
where $a =\dnstysrat{2}{r}/\dnstysrat{1}{q}$ and $b=\bias{2}{r}/\bias{1}{q}$.
\end{prop}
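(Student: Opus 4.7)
The plan is to specialize the closed-form rate coverage expression of Corollary \ref{cor:simnwcov} to the present two-RAT single-tier interference-limited setting with common path loss exponent $\alpha$, and then observe that the resulting $\bar{\psys}(b)$ is a sum of two reciprocals whose maximum over $b=\bias{2}{r}/\bias{1}{q}$ is exactly the stated argmax.

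First I would evaluate the two sums appearing in the denominators of Corollary \ref{cor:simnwcov} for $\setopen{V}{}=\{(1,q),(2,r)\}$. Since each RAT has a single open-access tier and no closed-access tier is present, the interference sum over $\set{V}_i$ collapses to a single self-term: $D_{1q}(q,\cdot)=\dnstysrat{1}{q}\Z(\uRATEthresh(\nRATEthresh_{1q}\avload{1}{q}),\alpha,1)$, and symmetrically $D_{2r}(r,\cdot)=\dnstysrat{2}{r}\Z(\uRATEthresh(\nRATEthresh_{2r}\avload{2}{r}),\alpha,1)$. Adopting the weight $\metric{m}{k}=\power{m}{k}\bias{m}{k}$ with $\bias{1}{q}$ normalized to unity gives $\nmetric{2}{r}=\npower{2}{r}b$ when the serving class is $(1,q)$, and the reciprocal $1/(\npower{2}{r}b)$ when it is $(2,r)$. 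Hence the cross sums of $G$ evaluate to $\dnstysrat{1}{q}+\dnstysrat{2}{r}(\npower{2}{r}b)^{2/\alpha}$ and $\dnstysrat{2}{r}+\dnstysrat{1}{q}(\npower{2}{r}b)^{-2/\alpha}$, respectively.

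Substituting into Corollary \ref{cor:simnwcov} and dividing the numerator and denominator of the first summand by $\dnstysrat{1}{q}$, and of the second by $\dnstysrat{2}{r}=a\dnstysrat{1}{q}$, the density prefactors cancel and I obtain precisely the two bracketed reciprocals in the statement. Taking $\arg\max_{b}$ on both sides completes the reduction.

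The main obstacle to a genuinely closed-form $b_{\mathrm{opt}}$ is that $\avload{1}{q}$ and $\avload{2}{r}$ themselves depend on $b$ through the association probabilities, which by the simplified form (\ref{eq:simplifiedassocpr}) of Lemma \ref{lem:aspr} read $\passoc_{1q}=(1+a(\npower{2}{r}b)^{2/\alpha})^{-1}$ and $\passoc_{2r}=1-\passoc_{1q}$. This coupling enters nonlinearly through $\uRATEthresh(\nRATEthresh_{ij}\avload{i}{j})=2^{\nRATEthresh_{ij}\avload{i}{j}}-1$ inside each $\Z$, so the first-order condition in $b$ does not admit an analytic root. The proposition therefore records $b_{\mathrm{opt}}$ only as an argmax; since the objective is a smooth scalar function of the single variable $b$, it can be optimized by a one-dimensional numerical search without difficulty.
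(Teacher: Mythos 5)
Your proposal is correct and follows essentially the same route as the paper: specialize the closed-form rate coverage of Corollary \ref{cor:simnwcov} to $\set{V}=\{(1,q),(2,r)\}$ with $\metric{m}{k}=\power{m}{k}\bias{m}{k}$, $\dnstysrat{2}{r}=a\dnstysrat{1}{q}$, $\bias{2}{r}=b\bias{1}{q}$, and maximize over $b$; the paper simply states this substitution while you carry out the algebra explicitly. Your closing remark on why no closed form exists (the loads $\avload{i}{j}$ depend on $b$ through $\passoc_{ij}$) matches the paper's discussion following the proposition.
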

\begin{IEEEproof}
The optimum association bias  can be found by maximizing the expression obtained from Corollary \ref{cor:simnwcov}  using $\set{V}=\{(1,q);(2,r)\}$, $\dnstysrat{2}{r} = a\dnstysrat{1}{q}$, and $\bias{2}{r} = b\bias{1}{q}$.
\end{IEEEproof}
Unfortunately, a closed form expression for the optimal bias is not possible in this case, as the load (and hence the threshold) is dependent on the association bias $b$. However, the optimal  association bias, $b_\mathrm{opt}$, for the rate coverage can be found out through a  linear search using the above Proposition. In a general setting, the computational complexity of finding the optimal biases, however, increases with the number of classes of APs in the network as the dimension of the problem increases. While the exact computational complexity depends upon the choice of optimization algorithm, the proposed analytical approach is clearly less complex than exhaustive simulations by virtue of the easily computable rate coverage expression.

The analysis in this section shows  that for a two-RAT scenario, $\SIR$ coverage and rate  coverage exhibit considerably different behavior. The optimal traffic offload fraction for $\SIR$ coverage is independent of the density whereas for rate coverage it is expected to increase because of the decreasing load per AP for the second RAT.  For a fixed bias, rate coverage always increases with density, however for a fixed density there is always  an optimal traffic offload fraction. These insights might be known to practicing wireless system engineers but here a theoretical analysis makes the observations rigorous.

\section{Results and Discussion}\label{sec:results}
In this section we primarily consider a setting of macro tier of RAT-1 overlaid with a low power tier of RAT-2, i.e., $\set{V}=\{(1,1);(2,3)\}$. This setting is similar to the widespread use of  WiFi APs to offload the macro cell traffic. In particular, the effect of association bias and traffic offload fraction on $\SIR$ and rate coverage is investigated. Thermal noise is ignored in the following results.
\subsection{\texorpdfstring{$\SIR$}{SIR} coverage}
The variation  of $\SIR$ coverage  with the density of RAT-2 APs for different values of  association bias is shown in Fig. \ref{fig:pcov_density}. The path loss exponent used is $\ple{m}{k}\equiv 3.5$ and the respective $\SIR$ thresholds are $\SINRthresh_{11} = 3$ dB and $\SINRthresh_{23} =6$ dB.  It is clear that for any fixed value of association bias, $\pcov$ is sub-optimal for all values of densities except for the bias value satisfying Proposition \ref{prop:optbias}. Also shown is the optimum $\SIR$ coverage (Proposition \ref{prop:optbias}), which is invariant to the density of APs.

Variation of $\SIR$ coverage  with the association bias is shown in Fig. \ref{fig:pcov_bias} for different densities of RAT-2  APs.  As shown, increasing density of RAT-2 APs decreases the optimal offloading bias. This is due to the corresponding increase in the interference for offloaded users in RAT-2. This insight will also be useful in rate coverage analysis. Again, at all values of association bias, $\pcov$ is sub-optimal for all density values except for the optimum density,  $\lambda_\mathrm{opt} = \left(\frac{\power{1}{q}}{\power{2}{r}\nbias{2}{r}}\right)^{2/\alpha}\frac{\Z(\SINRthresh_{1q},\alpha,1)}
{\Z(\SINRthresh_{2r},\alpha,1)}$.
\subsection{Rate Coverage}
The variation  of rate coverage with the density of RAT-2  APs for different values of association  bias is shown  in Fig. \ref{fig:rcov_density} and  the  variation   with the association bias is shown in Fig. \ref{fig:rcov_bias} for different  densities of RAT-2  APs. In these results, the user density $\userdnsty= 200$ users/km$^2$, the rate threshold $\RATEthresh_{mk}\equiv 256$ Kbps, the effective bandwidth $\res{m}{k} \equiv 10$ MHz, and the path loss exponent  is $\ple{m}{k} \equiv 3.5$. As expected, rate coverage increases  with increasing AP density because of the decrease in load at each AP. The optimum association bias for rate  coverage is obtained by a linear search as in Proposition \ref{prop:optbiasncov}.  For all values of association bias, $\psys$ is sub-optimal except for the one given in Proposition \ref{prop:optbiasncov}.
Fig. \ref{fig:rcov_fiveile} shows the effect of association bias on the $5^{\mathrm{th}}$ percentile rate $\RATEthresh_{95}$ with $\psys|_{\RATEthresh_{95}}=0.95$ (i.e., $95\%$ of the user population receives a rate greater than $\RATEthresh_{95}$)
for different  densities of RAT-2  APs. Comparing Fig. \ref{fig:rcov_bias} and Fig. \ref{fig:rcov_fiveile}, it can be seen that the optimal bias is agnostic to rate thresholds. This leads to the design insight that for given network  parameters  re-optimization is not needed for different rate thresholds. The developed analysis can also be used to find optimal biases for a more general setting. Fig. \ref{fig:rcov_fiveile_tworattwotier} shows the $5^{\mathrm{th}}$ percentile rate for a setting with $\set{V}=\{(1,1);(1,2);(2,2);(2,3)\}$, $\dnstysrat{1}{1}=1$ BS/km$^2$, $\dnstysrat{1}{2}=\dnstysrat{1}{2}=5$ BS/km$^2$,  $\bias{1}{2}=\bias{2}{2}=5$ dB as a function of association bias of $(2,3)$.
It can be seen that the choice of association biases  can heavily influence rate coverage.

A common observation  in  Fig. \ref{fig:rcov_bias}-\ref{fig:rcov_fiveile_tworattwotier} is the decrease in the optimal offloading bias with the increase in  density of APs of the corresponding  RAT. This can be explained by the earlier  insight of decreasing  optimal bias for  $\SIR$ coverage with increasing density. However, in contrast to the trend  in $\SIR$ coverage,  the optimum traffic offload fraction increases with increasing density as the corresponding load at each AP of second RAT decreases. These trends are further highlighted in Fig. \ref{fig:bias_trend} for the following scenarios:
\begin{itemize}
\item Case 1:  $\res{1}{1}= 15$ MHz, $\res{2}{3}= 5$ MHz, $\RATEthresh_{11} = 256$ Kbps, and $\RATEthresh_{23} = 512$ Kbps.
\item Case 2:  $\res{1}{1}= 5$ MHz, $\res{2}{3}= 15$ MHz, $\RATEthresh_{11} = 512$ Kbps, and $\RATEthresh_{23} = 256$ Kbps.
\end{itemize}
It can be seen  that apart from the effect of deployment density, optimum choice of association  bias and traffic offload fraction also depends  on the ratio of rate threshold ($\RATEthresh_{ij}$) to the bandwidth ($\res{i}{j}$), or $\nRATEthresh_{ij}$. In particular, larger the ratio of the available resources to the rate threshold  more is the tendency to be offloaded to the corresponding RAT.

\begin{figure}
	\centering
		\includegraphics[width=\columnwidth]{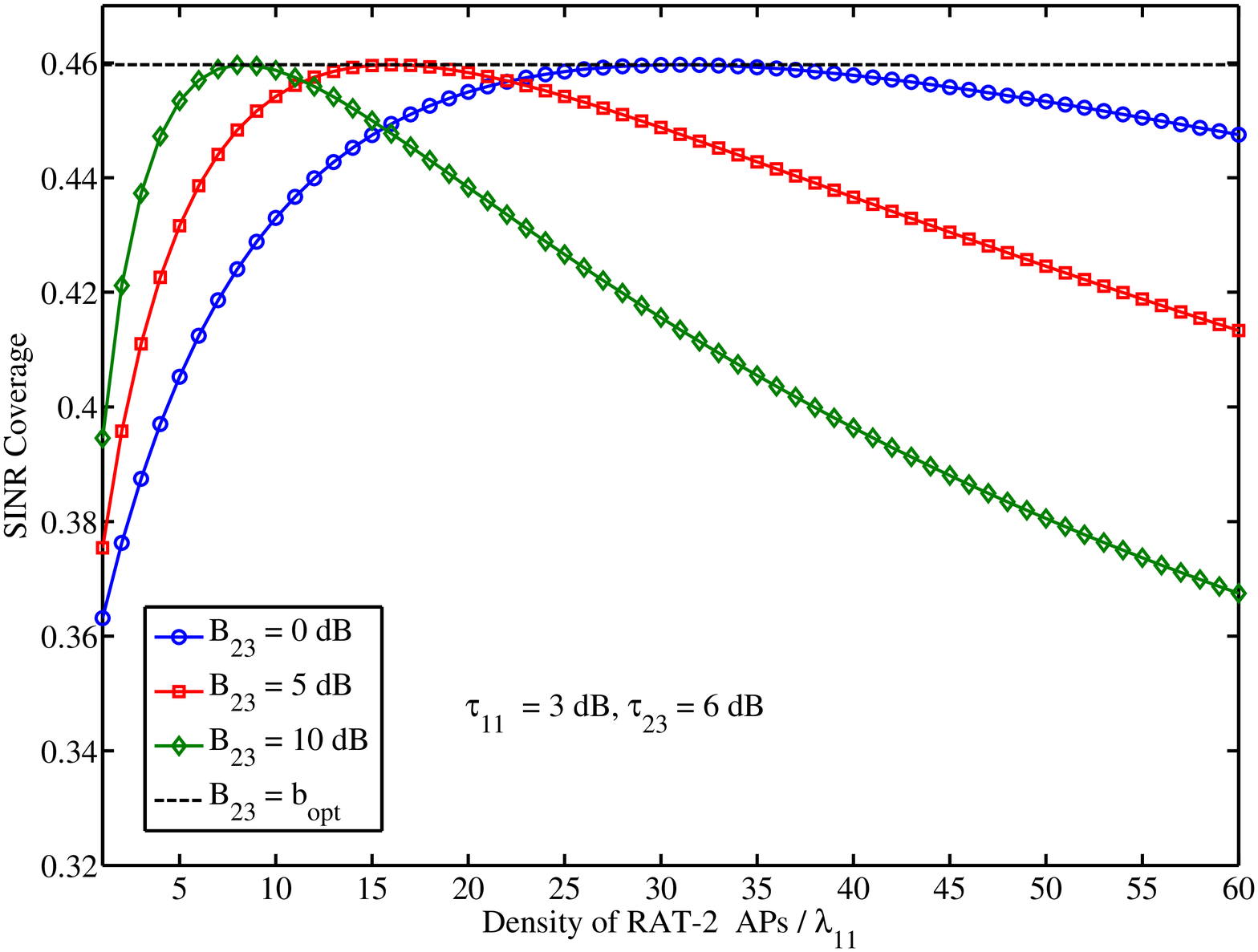}
		\caption{Effect of density of RAT-2  APs on $\SINR$ coverage.}
	\label{fig:pcov_density}
\end{figure}
\begin{figure}
	\centering
		\includegraphics[width=\columnwidth]{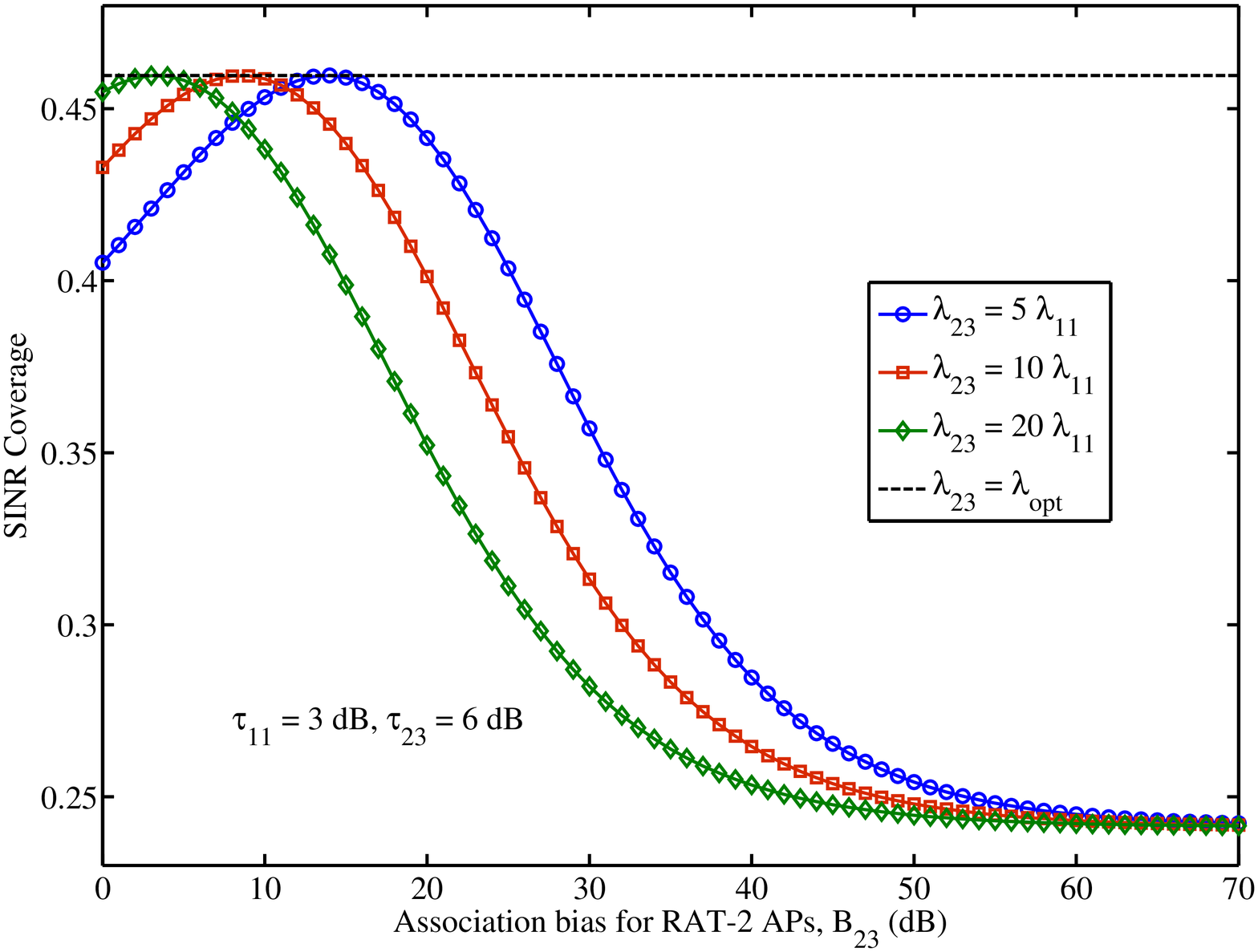}
		\caption{Effect of association bias for RAT-2 APs on $\SINR$ coverage.}
	\label{fig:pcov_bias}
\end{figure}
\begin{figure}
	\centering
		\includegraphics[width=\columnwidth]{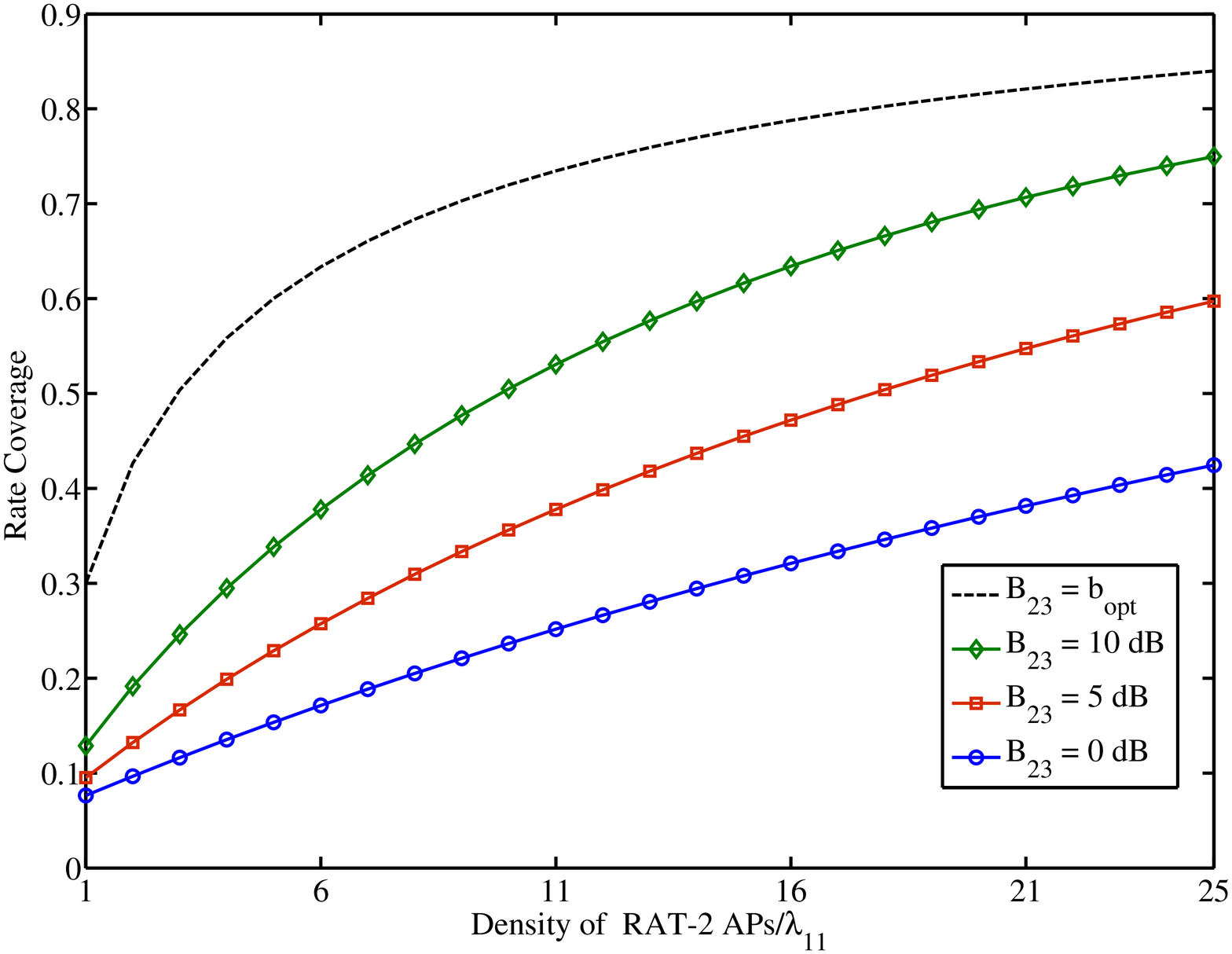}
		\caption{Effect of density of RAT-2 APs on rate coverage.}
	\label{fig:rcov_density}
\end{figure}
\begin{figure}
	\centering
		\includegraphics[width=\columnwidth]{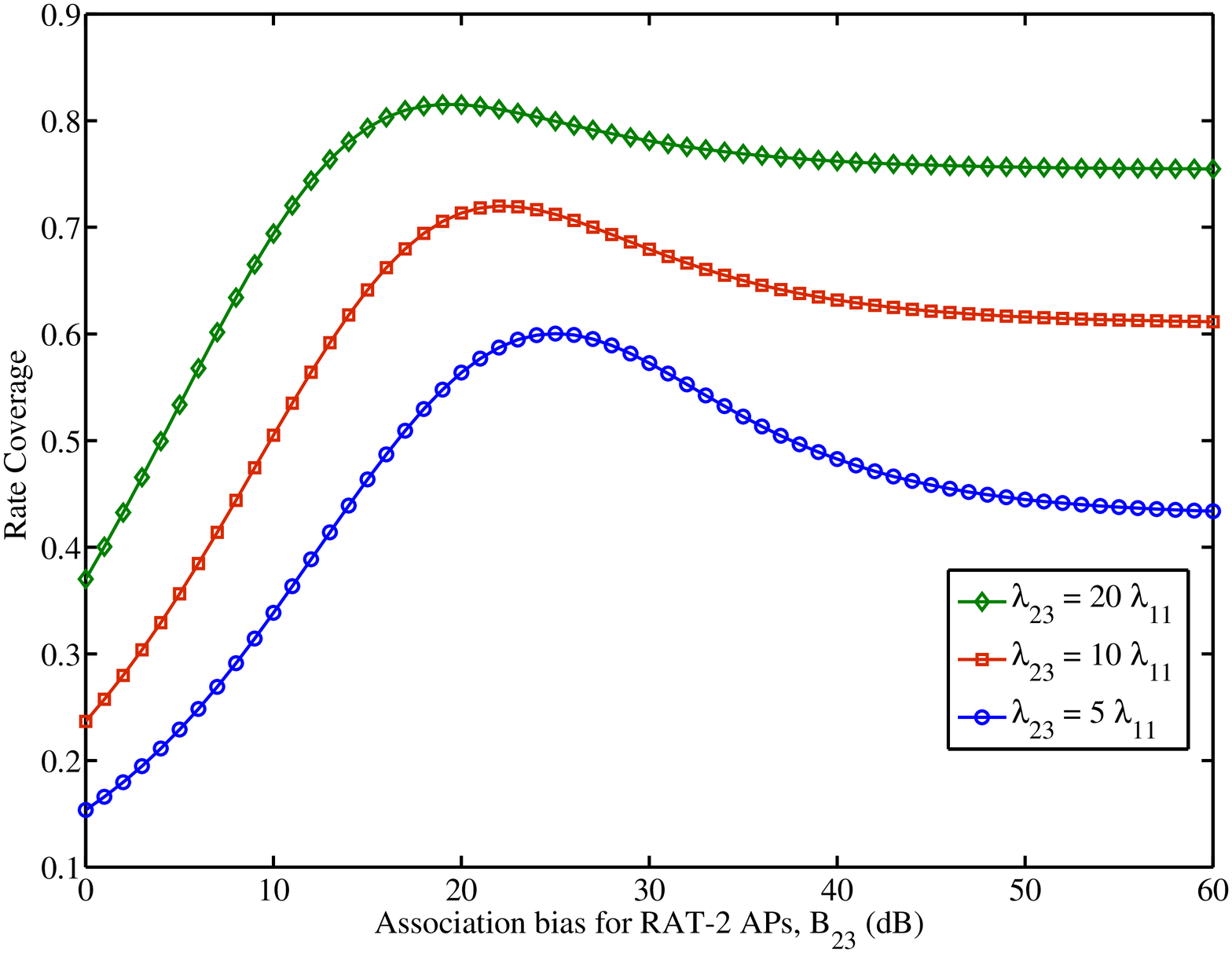}
		\caption{Effect of association bias for RAT-2 APs on rate coverage.}
	\label{fig:rcov_bias}
\end{figure}
\begin{figure}
	\centering
		\includegraphics[width=\columnwidth]{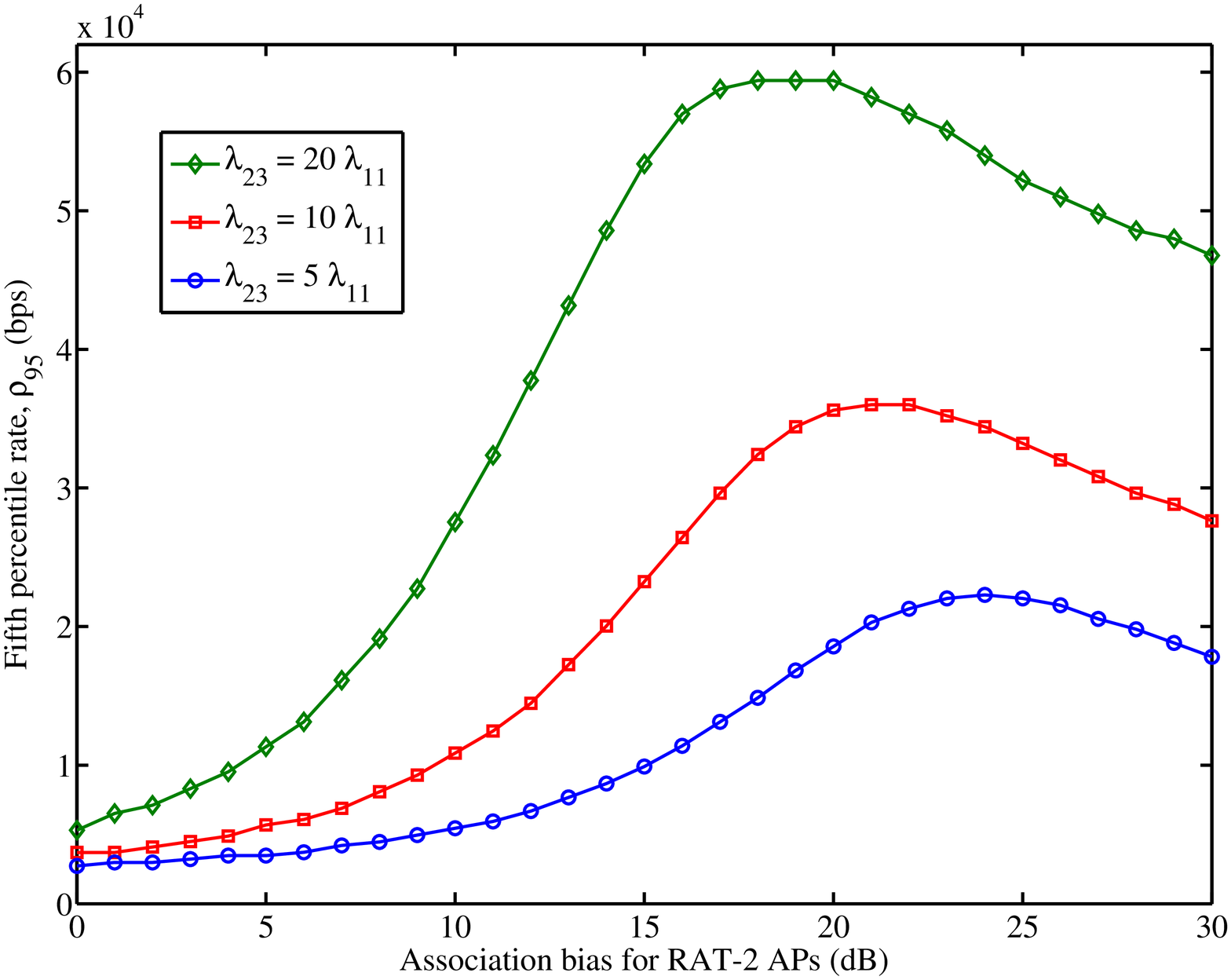}
		\caption{Effect of association bias for RAT-2 APs on $5^{\mathrm{th}}$ percentile rate with $\set{V}=\{(1,1);(2,3)\}$.}
	\label{fig:rcov_fiveile}
\end{figure}
\begin{figure}
	\centering
		\includegraphics[width=\columnwidth]{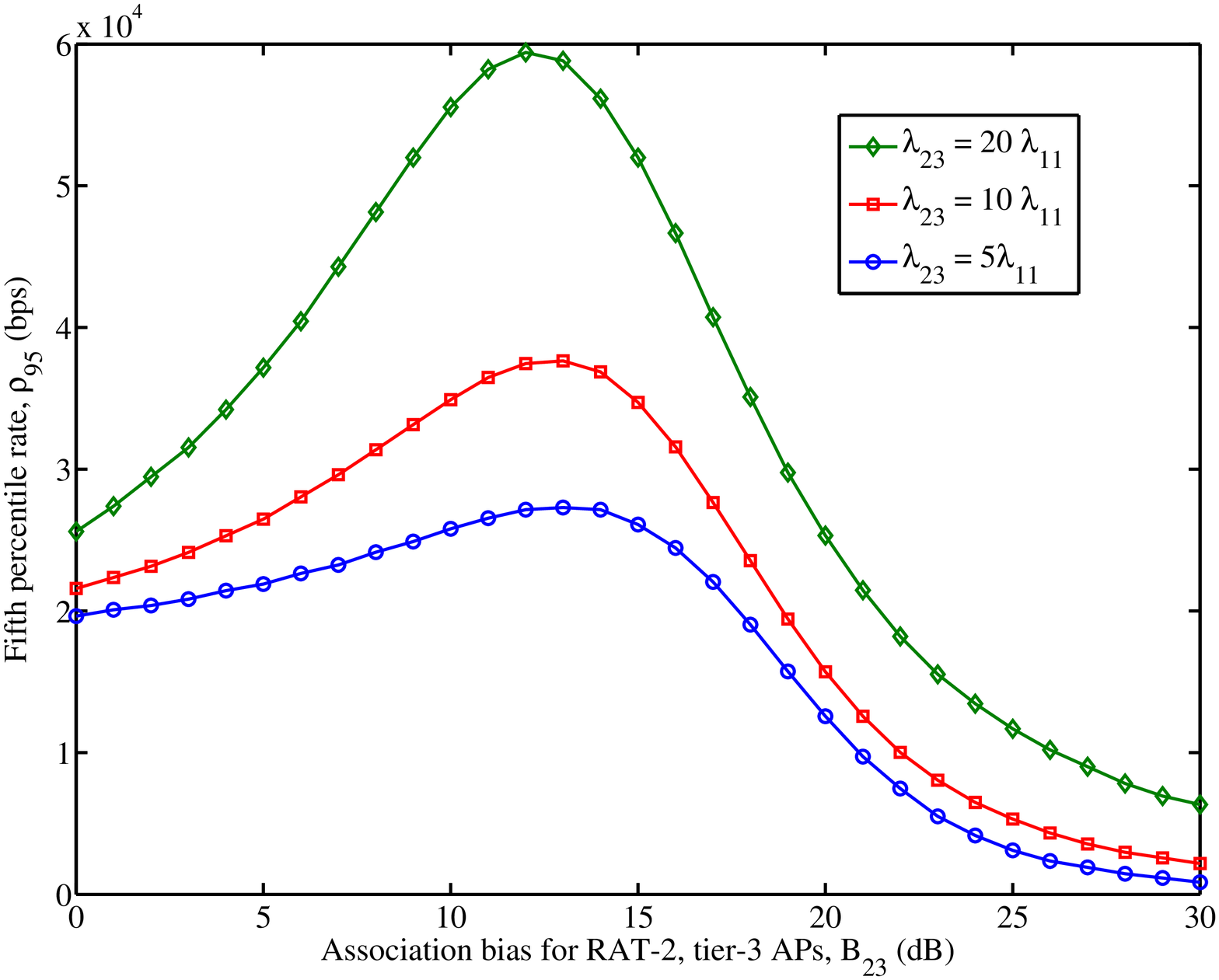}
		\caption{Effect of association bias for third tier of RAT-2 APs on $5^{\mathrm{th}}$ percentile rate with $\dnstysrat{1}{2}=\dnstysrat{2}{2}=5\dnstysrat{1}{1}$, $\bias{1}{2}=\bias{2}{2}=5$ dB.}
	\label{fig:rcov_fiveile_tworattwotier}
\end{figure}
\section{Conclusion}\label{sec:con}
In this paper, we presented a tractable model to analyze the effects of offloading in a  $M$-RAT   $K$-tier wireless heterogeneous network setting under a flexible association model. To the best of our knowledge, the presented work is the first to study rate coverage in the context of inter-RAT offload.
Using biased received power based association, it is shown that there exists an optimum percentage of the traffic that should be offloaded for maximizing the rate coverage  which in turn is dependent on user's QoS requirements and the resource condition at each available RAT  besides from the received signal power and load.  Investigating the coupling of AP queues induced by offloading, which has been ignored in this work, could be an interesting future extension.  Although the emphasis of this work has been on inter-RAT  offload, the framework can also be used to provide insights for  inter-tier offload within a RAT. Also, the area approximation for the association regions can be improved further by employing a non-linear approximation.
\begin{figure}
	\centering
		\includegraphics[width=\columnwidth]{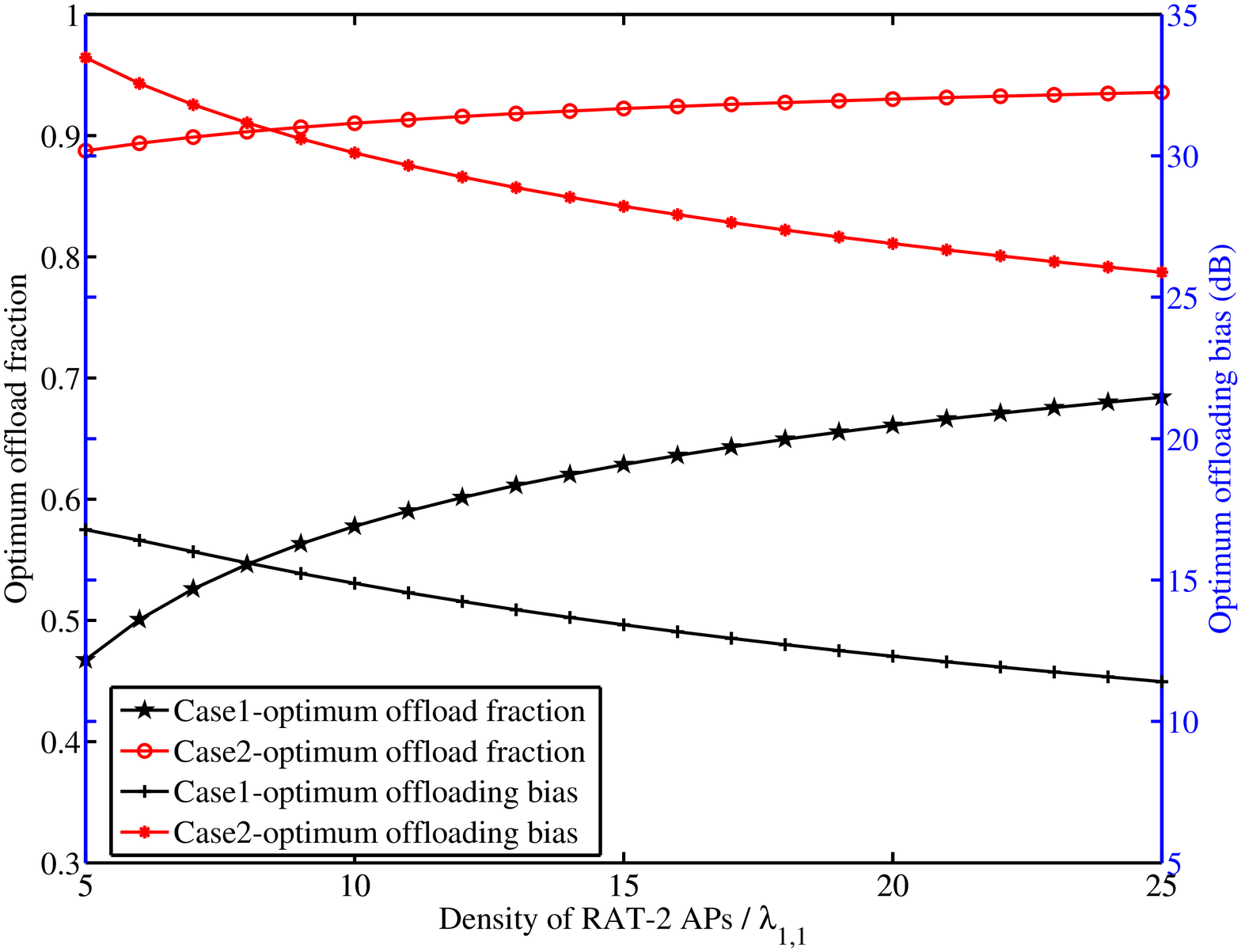}
		\caption{Effect of user's rate requirements and effective resources on the optimum association bias and optimum traffic offload fraction.}
	\label{fig:bias_trend}
\end{figure}
\appendices
\section{}\label{sec:proofassocpr}
\begin{proof}[Proof of Lemma \ref{lem:aspr}\nopunct]
If $\passoc_{ij}$ is the association probability of a typical user with RAT-tier pair  $(i,j)$, then
\begin{equation}
\passoc_{ij} = \pr\left(\bigcap_{\substack{\mk \in \setopen{V}{}\\(m,k) \neq (i,j)}} \left\{\metric{i}{j}\NDIST{i}{j}^{-\ple{i}{j}}>\metric{m}{k}\NDIST{m}{k}^{-\ple{m}{k}}\right\}
\right),
\end{equation}
since $\NDIST{m}{k}$ denotes the distance to nearest  AP in $\PPPsrat{m}{k}$.
Thus
\small
\begin{align}
&\passoc_{ij} \overset{(a)}{=}  \prod_{\substack{(m,k) \in \setopen{V}{}\\(m,k) \neq (i,j)}} \pr\left( \metric{i}{j}\NDIST{i}{j}^{-\ple{i}{j}}>\metric{m}{k}\NDIST{m}{k}^{-\ple{m}{k}}
\right)\\
&=  \int\limits_{\ndistns>0} \prod_{\substack{(m,k) \in \setopen{V}{}\\(m,k) \neq (i,j)}} \pr\left( \NDIST{m}{k}>(\nmetric{m}{k})^{1/\ple{m}{k}}\ndistns^{1/\nple{m}{k}}
\right) f_{\NDIST{i}{j}}(\ndistns)\mathrm{d} \ndistns \label{eq:passoc1},
\end{align}
\normalsize
 where (a) follows from the independence of $\PPPsrat{m}{k}$, $\forall \mk \in \set{V}.$
Now
\begin{equation}\label{eq:nullpr}
\pr\left(\NDIST{m}{k} >\ndistns \right) =\pr\left(\PPPsrat{m}{k} \cap b(0,\ndistns) = \emptyset \right) = e^{-\pi \dnstysrat{m}{k}\ndistns^2},
\end{equation}
\normalsize
 where $b(0,z)$ is the Euclidean ball of radius $\ndistns$ centered at origin.
The probability distribution function $f_{\NDIST{m}{k}}(\ndistns)$ can be written as
\begin{align}
f_{\NDIST{m}{k}}(\ndistns)& = \frac{\mathrm{d}}{\mathrm{d}\ndistns}\{1- \pr(\NDIST{m}{k}>\ndistns)\}\nonumber\\
& = 2\pi\dnstysrat{m}{k}\ndistns\exp(-\pi\dnstysrat{m}{k}\ndistns^2),\,\, \forall \ndistns \geq 0.\label{eq:pdfndist}
\end{align}
Using (\ref{eq:passoc1}), (\ref{eq:nullpr}) and (\ref{eq:pdfndist})
\begin{multline}
\passoc_{ij} = 2\pi\dnstysrat{i}{j}\\
\times \int\limits_{\ndistns>0} \ndistns\exp\left( -\pi\sum_{\substack{(m,k)\in \setopen{V}{}\\\mk \neq (i,j)}}\dnstysrat{m}{k}(\nmetric{m}{k})^{2/\ple{m}{k}}\ndistns^{2/\nple{m}{k}}\right)\\\times\exp(-\pi\dnstysrat{i}{j}\ndistns^2)\mathrm{d} \ndistns,
\end{multline}
which gives (\ref{eq:aspr}).
\end{proof}

\section{}\label{sec:proofoloadpgf}
\begin{IEEEproof}[Proof of Lemma \ref{lem:oloadpgf}]
As a random user is more likely to lie in a larger association region then in a smaller association region, the distribution of the association area  of the tagged AP, $\area_{ij}^{'}$, is proportional to its area and can be written as
\begin{equation}
f_{\area_{ij}^{'}}(\areans) \propto \areans f_{\area_{ij}}(\areans).
\end{equation}
Using the normalization property of the distribution function and (\ref{eq:areadist}), the biased area distribution is
\small
\begin{align}
f_{\area_{ij}^{'}}(\areans)& = \frac{\areans f_{\area_{ij}}(\areans)}{\expect{\area_{ij}}}\label{eq:biaseareadist}
= \frac{3.5^{3.5}}{\Gamma(3.5)}\frac{\dnstysrat{i}{j}}{\passoc_{ij}}\left(\frac{\dnstysrat{i}{j}}{\passoc_{ij}}\areans\right)^{3.5}\exp\left(-3.5\frac{\dnstysrat{i}{j}}{\passoc_{ij}}\areans\right).
\end{align}
\normalsize
The location of the other users (apart from the typical user) in the association region of the tagged AP follows the reduced Palm distribution of $\PPPu$ which is the same as the original distribution since  $\PPPu$ is a PPP \cite[Sec. 4.4]{mecke_book}.
Thus, using Lemma \ref{lem:userpgf} and (\ref{eq:biaseareadist}), the PGF of the other users in the tagged AP is
\begin{align}
\pgf_{\oload{i}{j}}(z) & = \expect{\exp\left(\userdnsty\area_{ij}^{'}(z-1)\right)}\nonumber\\
& =\int\limits_{\areans>0}\exp\left(\userdnsty \areans(z-1)\right)\frac{3.5^{3.5}}{\Gamma(3.5)}\frac{\dnstysrat{i}{j}}{\passoc_{ij}}\left(\frac{\dnstysrat{i}{j}}{\passoc_{ij}}\areans\right)^{3.5}\nonumber\\
&\phantom{"="}\exp\left(-3.5\frac{\dnstysrat{i}{j}}{\passoc_{ij}}\areans\right)
\mathrm{d} \areans\\
&= 3.5^{4.5}\left(3.5+\frac{\userdnsty\passoc_{ij}}{\dnstysrat{i}{j}}(1-z)\right)^{-4.5}.
\end{align}
Using the PGF, the probability mass function can be derived as
\begin{align}
\pr\left(\load{i}{j}=n+1\right)&=\pr\left(\oload{i}{j}=n\right)=\frac{\pgf_{\oload{i}{j}}^{(n)}(0)}{n!}\nonumber\\
= \frac{3.5^{3.5}}{n!}\frac{\Gamma(n+4.5)}{\Gamma(3.5)}&\left(\frac{\userdnsty\passoc_{ij}}{\dnstysrat{i}{j}}\right)^n\times\left(3.5 + \frac{\userdnsty\passoc_{ij}}{\dnstysrat{i}{j}}\right)^{-(n+4.5)}.
\end{align}
For the second half of the proof, we use the property that the moments of a Poisson RV, $X \sim \text{Pois}(\lambda)$ (say), can be written in terms of Stirling numbers of the second kind, $\stirling{n}{k}$, as $\expect{X^n} = \sum_{k=0}^n \lambda^k\stirling{n}{k}$. Now
\begin{align}
\expect{\oload{i}{j}^n}&=\expect{\expect{\oload{i}{j}^n|\area_{ij}^{'}}}\\
 = & \expect{\sum_{k=0}^n (\userdnsty\area_{ij}^{'})^k\stirling{n}{k}}= \sum_{k=1}^n\userdnsty^k \stirling{n}{k}\expect{\area_{ij}^{'k}}.
\end{align}
Using (\ref{eq:biaseareadist}) and the area approximation (\ref{eq:area_approx})
\begin{align}
\expect{\area_{ij}^{'k}}&=\frac{\expect{\area_{ij}^{k+1}}}{\expect{\area_{ij}}}
= \frac{(\dnstysrat{i}{j}/\passoc_{ij})^{-(k+1)}\expect{\area^{k+1}(1)}}{(\dnstysrat{i}{j}/\passoc_{ij})^{-1}\expect{\area(1)}},
\end{align}
and thus
\[\expect{\oload{i}{j}^n} = \sum_{k=1}^n\left(\frac{\userdnsty\passoc_{ij}}{\dnstysrat{i}{j}}\right)^k \stirling{n}{k}\expect{\area^{k+1}(1)}.\]
\end{IEEEproof}


\section{}\label{sec:proofndist}
\begin{IEEEproof}[Proof of Lemma \ref{lem:ndist}]
If $\NDISTc{i}{j}$ denotes the distance between the typical user and the tagged AP in $(i,j)$, then the distribution of $\NDISTc{i}{j}$ is the distribution of $\NDIST{i}{j}$ conditioned on the user being associated with $(i,j)$.
Therefore
\begin{align}
\pr(\NDISTc{i}{j}>\ndistnsc) &
=\pr\left(\NDIST{i}{j}>\ndistnsc| \text{ user is associated with } (i,j)\right)\\
&=\frac{\pr\left(\NDIST{i}{j}>\ndistnsc, \text{user is associated with  } (i,j)\right)}{\pr\left(\text{user is associated with } (i,j)\right)}.\label{eq:derivndist1}
\end{align}
Now using Lemma \ref{lem:aspr}
\begin{multline}\label{eq:derivndist3}
\pr\left(\NDIST{i}{j}>\ndistnsc,\text{ user is associated with }  (i,j)\right)
\\=2\pi\dnstysrat{i}{j}\int\limits_{\ndistns>\ndistnsc}\ndistns\exp\left(-\pi\sum_{\mk\in \setopen{V}{}}G_{ij}(m,k)\ndistns^{2/\nple{m}{k}}\right)\mathrm{d} \ndistns.
\end{multline}
Using (\ref{eq:derivndist1}) and (\ref{eq:derivndist3}) we get
\begin{multline}
\pr(\NDISTc{i}{j}>\ndistnsc)\\ = \frac{2\pi\dnstysrat{i}{j}}{\passoc_{ij}}\int\limits_{\ndistns>\ndistnsc}\ndistns\exp\left(-\pi\sum_{\mk\in \setopen{V}{}}G_{ij}(m,k)\ndistns^{2/\nple{m}{k}}\right)\mathrm{d} \ndistns,
\end{multline}
which leads to the PDF of $\NDISTc{i}{j}$
\begin{equation}
f_{\NDISTc{i}{j}}(\ndistnsc)= \frac{2\pi\dnstysrat{i}{j}}{\passoc_{ij}}\ndistnsc\exp\left(-\pi\sum_{\mk  \in \setopen{V}{}} G_{ij}(m,k)\ndistnsc^{2/\nple{m}{k}}\right).
\end{equation}
\end{IEEEproof}

\section{}\label{sec:proofpcov}
\begin{IEEEproof}[Proof of Lemma \ref{lem:pcov}]
The $\SINR$ coverage of a user associated with an AP of $(i,j)$ is
\begin{equation}\label{eq:pcovproof0}
\pcov_{ij}(\SINRthresh_{ij})   = \int_{\ndistnsc>0} \pr(\SINR_{ij}(\ndistnsc) > \SINRthresh_{ij}) f_{\NDISTc{i}{j}}(\ndistnsc) \mathrm{d} \ndistnsc.
\end{equation}
Now $\pr(\SINR_{ij}(\ndistnsc)>\SINRthresh_{ij}) $ can be written as
\begin{align}
&\pr\left(\frac{\power{i}{j}h_{\ndistnsc} \ndistnsc^{-\ple{i}{j}}}{\sum_{k \in \set{V}_i} I_{ik} + \noisepower{i}}>\SINRthresh_{ij}\right)\\
&= \pr\left(h_\ndistnsc > \ndistnsc^{\ple{i}{j}}{\power{i}{j}}^{-1}\SINRthresh_{ij}\left\{\sum_{k \in \set{V}_i}I_{ik} +\noisepower{i} \right\}\right)\\
& =\expect{\exp\left(-\ndistnsc^{\ple{i}{j}}\SINRthresh_{ij}\power{i}{j}^{-1}\left\{\sum_{k \in \set{V}_i}I_{ik} + \noisepower{i}\right\}\right)}\\
&\overset{(a)}{=}\exp\left(-\frac{\SINRthresh_{ij}}{\SNR_{ij}(y)}\right)\prod_{k \in \set{V}_i} \cexpect{\exp\left(-\ndistnsc^{\ple{i}{j}}\SINRthresh_{ij}\power{i}{j}^{-1}I_{ik}\right)}{I_{ik}}\\
&=\exp\left(-\frac{\SINRthresh_{ij}}{\SNR_{ij}(y)}\right)\prod_{k \in \set{V}_i}\mgf_{I_{ik}}\left(\ndistnsc^{\ple{i}{j}}\SINRthresh_{ij}\power{i}{j}^{-1}\right)\label{eq:pcovproof1},
\end{align}
where $\SNR_{ij}(y) = \frac{\power{i}{j}\ndistnsc^{-\ple{i}{j}}}{\noisepower{i}}$ and (a) follows from the independence of $I_{ik}$ and $\mgf_{I_{ik}}(s)$ is the
the  moment-generating function (MGF) of the interference.
Expanding the interference term, the MGF of interference is given by
\begin{multline}
\mgf_{I_{ik}}(s)\\= \mathbb{E}_{\tierPPP{i}{k},\PPPcrat{i}{k},h_x,h_{x'}}\Bigg[\exp\Bigg(-s\power{i}{k}\Bigg\{\sum_{x\in \tierPPP{i}{k} \setminus o}  h_{x} x^{-\ple{i}{k}} \\
+ \sum_{x'\in \PPPcrat{i}{k}}  h_{x'} {x'}^{-\ple{i}{k}}\Bigg\}\Bigg)\Bigg]\end{multline}
\begin{align}
&\overset{(a)}{=} \cexpect{\prod_{x\in\tierPPP{i}{k}\setminus o}\mgf_{h_x}\left(s \power{i}{k}x^{-\ple{i}{k}}\right)}{\tierPPP{i}{k}}\nonumber\\
&\phantom{"="}\times \cexpect{\prod_{x'\in\PPPcrat{i}{k}}\mgf_{h_{x'}}\left(s\power{i}{k}x'^{-\ple{i}{k}}\right)}{\PPPcrat{i}{k}}\\
&\overset{(b)}{=}\exp\left(-2\pi\tierdnsty{i}{k}\int_{\ndist{i}{k}}^{\infty}\left\{1-\mgf_{h_x}\left(s \power{i}{k}x^{-\ple{i}{k}}\right)\right\}x\mathrm{d}x\right)\nonumber\\
&\times
\exp\left(-2\pi\dnstycrat{i}{k}\int_{0}^{\infty}\left\{1-\mgf_{h_{x'}}\left(s \power{i}{k}x'^{-\ple{i}{k}}\right)\right\}x'\mathrm{d}x'\right)\end{align}
\begin{align}
&\overset{(c)}{=}\exp\Bigg(-2\pi\tierdnsty{i}{k}\int_{\ndist{i}{k}}^{\infty}\frac{x}{1+(s\power{i}{k})^{-1} x^{\ple{i}{k}}}\mathrm{d}x\nonumber\\
&\phantom{"="}-2\pi\dnstycrat{i}{k}\int_{0}^{\infty}\frac{x'}{1+(s\power{i}{k})^{-1} x'^{\ple{i}{k}}}\mathrm{d}x'\Bigg),
\end{align}
where (a) follows from the independence of $\tierPPP{i}{k}$,$\PPPcrat{i}{k}$, $h_x$ and $h_x'$, (b) is obtained using the PGFL  \cite{mecke_book} of $\tierPPP{i}{k}$ and $\PPPcrat{i}{k}$, and (c) follows by using the MGF of an exponential RV with unit  mean.
In the above expressions,  $\ndist{i}{k}$ is the lower bound on distance of the closest open access interferer in $(i,k)$  which can be obtained by using (\ref{eq:association})
\begin{align}\label{eq:lowerboundndist}
\metric{i}{j}\ndistnsc^{-\ple{i}{j}}& = \metric{i}{k}\ndist{i}{k}^{-\ple{i}{k}} \text{ or }
\ndist{i}{k}= (\nmetric{i}{k})^{1/\ple{i}{k}}\ndistnsc^{1/\nple{i}{k}}.
\end{align}
Using change of variables  with $t= (s\power{i}{k})^{-2/\ple{i}{k}}x^2$, the integrals can be simplified as
\begin{align}
\int_{z_{ik}}^{\infty}&\frac{2x}{1+(s\power{i}{k})^{-1} x^{\ple{i}{k}}}\mathrm{d}x\nonumber\\
&= (s\power{i}{k})^{2/\ple{i}{k}}\int_{(s\power{i}{k})^{-2/\ple{i}{k}}z_{ik}^2}^{\infty}\frac{\mathrm{d}t}{1+t^{\ple{i}{k}/2}}\nonumber\\
& = \Z\left(s\power{i}{k},\ple{i}{k},z_{ik}^{\ple{i}{k}}\right),
\end{align}
and
\begin{align}
\int_{0}^{\infty}&\frac{2x}{1+(s\power{i}{k})^{-1} x^{\ple{i}{k}}}\mathrm{d}x
 = \Z\left(s\power{i}{k},\ple{i}{k},0\right),
\end{align}
where \[\Z(a,b,c)= a^{2/b}\int_{(\frac{c}{a})^{2/b}}^\infty \frac{\mathrm{d} u}{ 1+ u^{b/2}}.\]
This gives  the  MGF of interference
\begin{multline}\label{eq:mgfinterference}
\mgf_{I_{ik}}\left(s\right)
= \exp\Bigg(-\pi (s\power{i}{k})^{2/\ple{i}{k}}\\
\times\left\{\dnstysrat{i}{k}\Z\left(1,\ple{i}{k},\frac{z_{ik}^{\ple{i}{k}}}{s\power{i}{k}}\right)+\dnstycrat{i}{k}\Z\left(1,\ple{i}{k},0\right)\right\}\Bigg).
\end{multline}
Using $s=\ndistnsc^{\ple{i}{j}}\SINRthresh_{ij}\power{i}{j}^{-1}$  with $z_{ik}$ from (\ref{eq:lowerboundndist}) for MGF of interference in  (\ref{eq:pcovproof1}) we get
\begin{multline}
\pr(\SINR_{ij}(\ndistnsc)>\SINRthresh_{ij}) \\ = \exp\left(-\frac{\SINRthresh_{ij}}{\SNR_{ij}(y)}-\pi\sum_{k\in V_i}\ndistnsc^{2/\nple{i}{k}}D_{ij}\left(k,\SINRthresh_{ij}\right)\right),
\end{multline}
where
\footnotesize
 \begin{multline}
D_{ij}(k,\SINRthresh_{ij})  = \npower{i}{k}^{2/\ple{i}{k}}\left\{\tierdnsty{i}{k}\Z\left(\SINRthresh_{ij},\ple{i}{k},\npower{i}{k}^{-1}\nmetric{i}{k}\right)+\dnstycrat{i}{k}\Z\left(\SINRthresh_{ij},\ple{i}{k},0\right)\right\}.\nonumber\end{multline}
\normalsize
Using (\ref{eq:pcovproof0}) along with Lemma \ref{lem:ndist} gives
\footnotesize
\begin{multline}
\pcov_{ij}(\SINRthresh_{ij})  =\frac{2\pi\dnstysrat{i}{j}}{\passoc_{ij}}\int_{\ndistnsc>0}\ndistnsc\exp\Bigg(-\frac{\SINRthresh_{ij}}{\SNR_{ij}(y)}\\
-\pi\bigg\{\sum_{k\in \set{V}_i}D_{ij}(k,\SINRthresh_{ij})\ndistnsc^{2/\nple{i}{k}} + \sum_{\mk \in \setopen{V}{}} G_{ij}(m,k)\ndistnsc^{2/\nple{i}{k}}\bigg\}\Bigg)\mathrm{d}\ndistnsc.
\end{multline}
\normalsize
Using the law of total probability we get
\begin{equation}
\pcov = \sum_{(i,j)\in \setopen{V}{}}\pcov_{ij}(\SINRthresh_{ij})  \passoc_{ij},
\end{equation}
which gives the overall $\SINR$ coverage of a typical user.
\end{IEEEproof}

\section{}\label{sec:proofoptbias}
\begin{IEEEproof}[Proof of Proposition \ref{prop:optbias}]
In the described  setting $\SIR$ coverage can be written as
\begin{equation}\label{eq:pcovsimple}
\pcov= \sum_{(i,j)\in \setopen{V}{}} \frac{\dnstysrat{i}{j}}{ \sum_{k\in \set{V}_i}D_{ij}(k,\SINRthresh_{ij})  + \sum_{\mk \in \setopen{V}{}}G_{ij}(m,k)},
\end{equation}
and with $\set{V}=\{(1,q),(2,r)\}$, $\dnstysrat{2}{r} = a\dnstysrat{1}{q}$, and $\bias{2}{r} = b\bias{1}{q}$
\begin{align*}
\pcov & = \frac{\dnstysrat{1}{q}}{\dnstysrat{1}{q}\Z(\SINRthresh_{1q},\alpha,1)+ \dnstysrat{1}{q} +\dnstysrat{2}{r}(\npower{2}{r}\nbias{2}{r})^{2/\alpha}} \\
&\phantom{"="}+ \frac{\dnstysrat{2}{r}}{\dnstysrat{2}{r}\Z(\SINRthresh_{2r},\alpha,1)+ \dnstysrat{2}{r} +\dnstysrat{1}{q}(\npower{1}{q}\nbias{1}{q})^{2/\alpha}}\\
& = \frac{1}{\Z(\SINRthresh_{1q},\alpha,1)+1 +a(\npower{2}{r}b)^{2/\alpha}} \\
& \phantom{"="}+ \frac{1}{\Z(\SINRthresh_{2r},\alpha,1)+ 1 +\frac{1}{a(\npower{2}{r}b)^{2/\alpha}}}.
\end{align*}
The gradient of $\pcov$ with respect to association bias $\nabla_b\pcov$ is zero  at
\begin{align*}
b_\mathrm{opt} & = \arg \max_{b} \Bigg\{\left(\Z(\SINRthresh_{1q},\alpha,1)+1 +a(\npower{2}{r}b)^{2/\alpha}\right)^{-1}\\
& + \left(\Z(\SINRthresh_{2r},\alpha,1)+ 1 +\frac{1}{a(\npower{2}{r}b)^{2/\alpha}}\right)^{-1}\Bigg\}\\
& = \frac{\power{1}{q}}{\power{2}{r}}\left(\frac{\Z(\SINRthresh_{1q},\alpha,1)}
{a\Z(\SINRthresh_{2r},\alpha,1)}\right)^{\alpha/2}.
\end{align*}
With algebraic manipulation, it can be shown that for all $b> b_\mathrm{opt}$  $\nabla_b\pcov<0$ and for all $b< b_\mathrm{opt}$ $\nabla_b\pcov>0$ and  hence $\pcov$ is strictly quasiconcave in $b$ and $b_\mathrm{opt} $ is the unique mode.
Using Lemma \ref{lem:aspr}, the optimal traffic offload fraction  is obtained as
\begin{align}
\passoc_2 &= \frac{\dnstysrat{2}{r}}{G_{2r}(r)} = {a}\left\{a+ \left(\frac{\power{1}{q}}{\power{2}{r}b_\mathrm{opt}}\right)^{2/\alpha}\right\}^{-1}\nonumber\\
&= \frac{\Z(\SINRthresh_{1q},\alpha,1)}{\Z(\SINRthresh_{2r},\alpha,1)+\Z(\SINRthresh_{1q},\alpha,1)}.
\end{align}
The corresponding $\SIR$ coverage can then obtained by substituting the optimal bias value in (\ref{eq:pcovsimple}).
\end{IEEEproof}

\bibliographystyle{ieeetr}
\bibliography{IEEEabrv,refoffload}

\end{document}